 \newcommand{\lp}{\left(}
 \newcommand{\rp}{\right)}
 \newcommand{\llp}{\left\{}
 \newcommand{\rrp}{\right\}}
 \newcommand{\lllp}{\left[}
 \newcommand{\rrrp}{\right]}
\numberwithin{equation}{section}
\theoremstyle{plain}
\newtheorem{theorem}{Theorem}[section]
\newtheorem{lemma}{Lemma}[section]
\newtheorem{proposition}{Proposition}[section]
\theoremstyle{definition}
\newtheorem{definition}{Definition}
\newtheorem{example}[definition]{Example}
\begin{document}

\begin{frontmatter}
\title{Identification Problem for The Analysis of Binary Data with Non-ignorable Missing}
\runtitle{Identification Problem}

\begin{aug}
\author{\fnms{Kosuke} \snm{Morikawa}\thanksref{t1}\ead[label=e1]{morikawa@sigmath.es.osaka-u.ac.jp}
},
\author{\fnms{Yutaka} \snm{Kano}\thanksref{t1}\ead[label=e2]{kano@sigmath.es.osaka-u.ac.jp}
}
\runauthor{Morikawa and Kano}
\address{Division of Mathematical Science, Graduate School of Engineering Science, \\
Osaka University, Toyonaka, Osaka 560-8571, Japan.}
\affiliation{Osaka University}
\thankstext{t1}{\printead{e1,e2}}
\end{aug}
\begin{abstract}
When a missing-data mechanism is NMAR or non-ignorable, missingness is itself vital information and it must be taken into the likelihood, which, however, needs to introduce additional parameters to be estimated. The incompleteness of the data and introduction of more parameters can cause the identification problem. When a response variable is binary, it becomes a more serious problem because of less information of binary data, however, there are no methods to  briefly verify whether a mode is identified or not. Therefore, we provide a new necessary and sufficient condition to easily check model identifiability when analyzing binary data with non-ignorable missing by conditional models. This condition can give us what condition is needed for a model to have identifiability as well as make easily check the identifiability of a model 
\end{abstract}
\begin{keyword}
\kwd{Incomplete data, Dropout, Binary data, Not missing at random, Identifiability}
\end{keyword}
\end{frontmatter}

\section{Introduction}
\label{sec:2.1}
In statistical analysis, it is an awkward situation to lose data which we supposed to get completely as initially scheduled. In recent scientific experiments, subjects were told that they could drop out anytime they wanted from the perspective of human participant protection. Thus, there are many dropouts in some experiments. For example, \citet{machin88} reported results of comparative trial of two dosages of depot medroxyprogesterone acetate (DMPA, 100mg and 150mg) in which subjects are missing over 40$\%$  at the endpoint. In DMPA trial test, 1151 subjects were divided into two dosages randomly and took DMPA in every quarter, over one year.  They reported the results of DMPA as a binary data: if subjects experience amenorrhea denote by 1, else denote by 0. The judge whether a subject experienced amenorrhea or not was based on her menstrual diary. Each woman generates a sequence according to whether or not she experienced amenorrhea in the successive reference periods. The number of women with each sequence is shown in Table \ref{tb:1.1} where ``$\times$'' means missingness. For example, ``$0 1\times\times$'' means amenorrhea is not absent at first period, but present next period, and the data can not be obtained from third period. This data has been analyzed by several authors by several approaches(e.g., \citealp{birmigham03,matsuyama04,willkins06}). 
 
In the analysis of repeated measure data, serial correlations of a response variable $\bm{Y}_i=[Y_{i1},\,\ldots,\,Y_{iT}]' $ may not be ignored and any statistical model for $Y_i$ has to take the correlations into account. There are largely two approaches by which one incorporates the serial correlations into the models: conditional models and marginal models. Conditional models describe the serial correlation by modeling $Y_t$, which is the response at time $t$, given not only covariates $X$, but also $Y_1,\,\ldots,\,Y_{t-1}$, which are responses recorded early in time. The approach is intuitive and facile, and the serial correlations of $Y_i$ are obtained easily from the conditional model. On the other hand, there exists several models marginal models to analyze categorical data with non-ignorable missingness. \citet{fay86}, \citet{baker88} and \citet{park94} among them have used log-linear models to analyze them. Marginal models are also developed, which describe the serial correlation by modeling $\bm{Y}$'s moments given covariates $\bm{X}$ (e.g., E[$Y_1\,|\,X$],\,E[$Y_1Y_2\,|\,X$]) by adopting a fully parametric approach or by modeling the limited number of lower-order moments only, where they do not model under conditioning on the responses recorded early in time \citep{fitzmaurice93, molenberghs94,molenberghs97,molenberghs05}. Furthermore, recently, more complicated models are being proposed. A hyblid models is one of these models, which retains advantageous features of the  selection and pattern-mixture model approaches simultaneously\citep{willkins06, yuan09}. 

However, there exists an essential problem ``unidetifiabilitiy of models''\citep{fitzmaurice95,matsuyama04}. If the model does not have identifiability,  any statistical inference is distorted and asymptotic properties are not guaranteed such as consistency and asymptotic normality. Unfortunately, there are no methods to verify identifiability easily. 

The likelihood of conditional models are prone to be simple, but we can learn only the direct effects of covariates to the responses since responses recorded early in time are conditioned. Marginal models and the hybrid model can give us total effects of covariates to the response variables, which are often what we are most interested in. However, The likelihood of  marginal models and hybrid models is prone to be complicated, in addition, its parameter space may be restricted or demanded equality constraints. For example, in \citet{molenberghs97}, equality constraints are placed on the coefficients of the missing-data mechanism over time. This requires the probability of missingness is invariant throughout the experiment, which is an unnatural assumption since, in many cases, subjects would more weight on the response variables to decide to drop out the study at the start of experiment than at the endpoint.

Throughout this paper, let $\bm{Y}=[Y_1,\ldots,Y_T]'$ be a random binary variable and $\bm{M}_i=[M_{1},\ \ldots,\ M_{T}]' $ be the missing indicator, which takes 0(1) when corresponding component of $Y_{t}$ is observed(missing). The reason why we designate $\bm{Y}$  as a binary random variable is that the binary case is most difficult to become identifiable. 

\begin{table}[htbp]
\caption{Results of DMPA Trial}
\label{tb:1.1}
\begin{center}
{\small
\begin{tabular}{ccccc}
\hline
time & \shortstack{Amenorrhea\\ sequence} & & \multicolumn{2}{c}{DMPA(mg)} \\
\cline{4-5} 
 & & & 100 & 150 \\ \hline
1 & 0$\times \times \times$ & & 76 & 68 \\ 
  & 1$\times \times \times$ & & 23 & 31 \\
\cline{3-5}
 & & total & 99 & 99 \\ \hline 
2 & 0 0$ \times \times$ & & 43 & 39 \\ 
  & 0 1$ \times \times$ & & 14 & 27 \\
  & 1 0$ \times \times$ & & 3  & 6 \\
  & 1 1$ \times \times$ & & 8  & 15 \\
  \cline{3-5}
  & & total & 68 & 87 \\ \hline 
3 & 0 0 0$ \times$ & & 20 & 11 \\
  & 0 0 1$ \times$ & & 13 & 10 \\ 
  & 0 1 0$ \times$ & & 1  & 0  \\ 
  & 0 1 1$ \times$ & & 5  & 6  \\ 
  & 1 0 0$ \times$ & & 2  & 1  \\ 
  & 1 0 1$ \times$ & & 2  & 1  \\ 
  & 1 1 0$ \times$ & & 0  & 1  \\ 
  & 1 1 1$ \times$ & & 5  & 6  \\
  \cline{3-5}
  & & total & 48 & 36 \\ \hline 
4 & 0 0 0 0 & & 142 & 119 \\
  & 0 0 0 1 & & 49  & 36  \\
  & 0 0 1 0 & & 14  & 26  \\
  & 0 0 1 1 & & 41  & 44  \\
  & 0 1 0 0 & & 7   & 4   \\
  & 0 1 0 1 & & 8   & 12  \\
  & 0 1 1 0 & & 4   & 7   \\
  & 0 1 1 1 & & 32  & 48  \\
  & 1 0 0 0 & & 6   & 3   \\
  & 1 0 0 1 & & 7   & 6   \\
  & 1 0 1 0 & & 0   & 2   \\
  & 1 0 1 1 & & 10  & 12  \\
  & 1 1 0 0 & & 4   & 1   \\
  & 1 1 0 1 & & 4   & 3   \\
  & 1 1 1 0 & & 3   & 2   \\
  & 1 1 1 1 & & 30  & 28  \\
  \cline{3-5}
  & & total & 361 & 353 \\ \hline \\
\end{tabular}
}
\end{center}
\end{table}
\section{Identification Problem}
Typically, a categorical random variable $\bm{Y}$ has less information than a continuous random variable, and the lack of information can lead to ``identification problem'' as well as decrease of accuracy of estimation. To see this problem simply, assume the type of missing patterns is drop out. Denote focusing models of joint distribution function of $(\bm{M},\,\bm{Y})$ by $g$, a realized value of $\bm{M}$ whose number of observed elements is $t$  by $\bm{m}^{(t)} (t=1,\ \ldots,\,T)$, and observation(missing) part of $\bm{Y}$ by $\bm{Y}^{(t)}\left(\bm{Y}^{(-t)}\right)$ when the missing pattern is $\bm{m}^{(t)}$. Note that $\bm{Y}=[\bm{Y}^{(t)'},\,\bm{Y}^{(-t)'}]'$ always holds. In addition, let $g$ be one of the conditional models defined as follows: 
\begin{align*}
P(Y_t=1\mid y_1,\ldots,y_{t-1})=P(Y_t=1\mid \bm{h}^{((t-1)/p)})
\end{align*}
where \begin{align*}
\bm{h}^{(t/p)}:=
  \begin{cases}
    [y_{t-p},\,\ldots,\,y_{t}]' & \mathrm{if}~t-p\geq 1\\
    [y_{1},\,\ldots,\,y_{t}]' & \mathrm{otherwise}
  \end{cases}
  \qquad t=2,\,\ldots,\,T
\end{align*}
and
\begin{align*}
P(M_1=0)&=1,\\
P(M_t=1\mid M_{t-1}=0,\,y_1,\,\ldots,\,y_{t})&=P(M_t=1\mid M_{t-1}=0,\,y_{t-1},\,y_{t}),
\end{align*}
which we call AR($p$) model. In this model, it is assumed that $Y_t$ depends on the past own data until at most $p$ times and the missing-data mechanism depends on present data and only past nearest one data as with \citet{diggle94}. Note that ($Y_2,\,\ldots,\,Y_T$) may be missing and there are no covariates. The reason why considering the situation where there are no covariates at first is that this likelihood becomes so simple that we can study the identifiability easily.  

Modeling the relation between $\bm{Y}$ and missing indicator $\bm{M}$, we can introduce the limited number of parameters because of poor information of $\bm{Y}$. For example, in Table \ref{tb:1.1}, the total number of cells is $2+4+8+16=30$, thus, we can use at most $29$ parameters.  Let $T$ be the endpoint of the experiment and then we can use at most
\begin{align}
\sum_{t=1}^{T} 2^t -1 = 2^{T+1}-3
\label{2.1}
\end{align}
parameters. Because this condition is necessary not sufficient,  there would exist many models with parameters less than or equal to \eqref{2.1} but unidentified. 
\begin{example}
{\bf Logistic AR(1) model}\\
Suppose that the missing-data mechanism is given as
\begin{align}
P(M_t=1\mid M_{t-1}=0,\,y_{t-1},y_t;\tau_{t0},\tau_{tt-1},\tau_{tt})=\mathrm{expit}(\tau_{t0}+\tau_{tt-1}y_{t-1}+\tau_{tt}y_{t})
\label{Logistic}
\end{align}
and that the marginal distribution of $\bm{Y}$ is expressed in the form:
\begin{align}
P(Y_1=1;\theta_1)&=\theta_1,
\label{pi}\\
P(Y_t=1\mid y_{t-1};\theta_{t0},\theta_{tt-1})&=\mathrm{expit}(\theta_{t0}+\theta_{tt-1}y_{t-1})\nonumber,
\end{align}
where ``expit" is the inverse function of ``logit" function, each $\theta_{t0}$ and $\tau_{t0}$ are  intercepts in the model, and  $\theta_{tt-1}$, $\tau_{tt-1}$ and $\tau_{tt}$ are coefficients in the models. We call this model the Logistic AR(1) model. Note that when $\tau_{tt-1}=\tau_{tt}=0$ for $t=2,\,\ldots,\,T$, its mechanism is MCAR; when $\tau_{tt}=0$  for $t=2,\,\ldots,\,T$ and there exists $s$ such that $\tau_{ss-1}\neq 0$, it is MAR;  when there exists $s\,(s=2,\,\ldots,\,T)$ such that $\tau_{ss}\neq 0$, it is NMAR. In the Logistic AR(1) model, there are one parameter $\theta_1$ and five parameters $\lp\bm{\xi}_t :=[\theta_{t0},\theta_{tt-1},\tau_{t0},\tau_{tt-1},\tau_{tt}]'=[\bm{\theta}'_t,\bm{\tau}'_t]'\rp$ at each time $t~(2\leq t\leq T)$. Thus, the number of the parameters is
\begin{align}
1+5(T-1)=5T-4.
\label{2.2}
\end{align}
The relation between \eqref{2.1} and \eqref{2.2} is 
\begin{align*}
 \begin{cases}
    2^{T+1}-3<5T-4 & \mathrm{if}\quad T=2 \\
    2^{T+1}-3>5T-4 & \mathrm{if}\quad T\geq 3
  \end{cases} .
\end{align*}
Therefore, if $T=2$, the model does not have identifiability and if $T\geq 3$, the model meets the necessary condition. As we can see in a later section, however, the identifiability dose not hold for all $T$. To see this, we define ``identifiability '' explicitly at first.
\end{example}
\begin{definition}
Let $\Xi$ be a parameter space, $\bm{\xi}^*$  be a true value of the model and an interior point of $\Xi$, $P_{\bm{\xi}^*}$ be a probability measure of a probability function of complete data $(\bm{M},\,\bm{Y})$ prescribed by a true parameter $\bm{\xi}^*$, and denote a probability function of observed data $(\bm{M},\,\bm{Y}^{(t)})$ derived from $g$ by $g_t\,(t=1,\ldots,T)$, which is represented as
\begin{align*}
g_t\lp\bm{m}^{(t)},\,\bm{y}^{(t)}_i\,;\,\bm{\xi}\rp=
\begin{cases}
\sum_{\bm{y}^{(-t)}\in\{0,1\}^{\otimes(T-t)}}g\lp\bm{m}^{(t)},\,\bm{y}^{(t)}_i,\, \bm{y}^{(-t)}\,;\,\bm{\xi}\rp & \mathrm{if~}t=1,\ldots, T-1 \\
g\lp\bm{m}^{(t)},\,\bm{y}_i\,;\,\bm{\xi}\rp & \mathrm{otherwise} 
\end{cases}.
\end{align*}
Then, a parametric model $g$ is said to be identifiable, if
 \begin{align}
g_t\lp\bm{m}^{(t)},\,\bm{y}^{(t)}\ ; \ \bm{\xi}\rp= g_t\lp\bm{m}^{(t)},\,\bm{y}^{(t)}\ ; \ \bm{\xi}^*\rp\quad \mathrm{a.s.}~P_{\bm{\xi}^*}\quad \mathrm{for}\ \forall t,\,\forall\bm{y}^{(t)}\ \Rightarrow\ \bm{\xi}= \bm{\xi}^*. 
\label{2.3}
\end{align}
Since $\bm{m}^{(t)}$ and $\bm{y}^{(t)}$ are binary random vectors, this is also equivalent to 
\begin{align}
\begin{split}
&g_t\lp\bm{m}^{(t)},\,\bm{y}^{(t)}; \ \bm{\xi}\rp= g_t\lp\bm{m}^{(t)},\,\bm{y}^{(t)}; \ \bm{\xi}^*\rp\ \ \mathrm{or}\ \ g_t\lp\bm{m}^{(t)},\,\bm{y}^{(t)}; \ \bm{\xi}^*\rp=0 \quad \mathrm{for}~\forall t,\,\forall\bm{y}^{(t)}\\
\Rightarrow\ &\bm{\xi}=\bm{\xi}^*. \label{2.6}
\end{split} 
\end{align}
\end{definition}

The likelihood  took into account of $\bm{M}$ is called full information maximum likelihood(FIML) and say $L_N(\bm{\xi})$, where $N$ is sample size. It becomes
\begin{align*}
L_N(\bm{\xi}):= \prod_{i=1}^N\sum_{t=1}^T\bm{1}_{\{\bm{m}_i=\bm{m}^{(t)}\}} g_t\lp\bm{m}^{(t)},\,\bm{y}^{(t)}_i\,;\,\bm{\xi}\rp.
\end{align*}
Let $L(\bm{\xi})$ be a function which is the destination of the log-likelihood $\frac{1}{N}\log(L_N(\bm{\xi}))$ as  $N$ tends to infinity. There exists such a function by the strong law of large numbers, 
\begin{align*}
\lim_{N\to\infty}\frac{1}{N}\log\{L_N(\bm{\xi})\}
&=\lim_{N\to\infty}\frac{1}{N}\sum_{i=1}^N \log \llp \sum_{t=1}^T \bm{1}_{\{\bm{m}_i=\bm{m}^{(t)}\}}g_t(\bm{m}^{(t)},\,\bm{y}^{(t)}_i;\,\bm{\xi})\rrp\\
&= E_{\bm{\xi}^*}\lllp\log \llp g_t(\bm{m}^{(t)},\,\bm{y}^{(t)};\,\bm{\xi})\rrp\rrrp\lp=:L(\bm{\xi})\rp \hspace{27mm} \mathrm{a.s.}~P_{\bm{\xi}^*}\\
&= \sum_{t=1}^T\sum_{\bm{y}\in\{0,\,1\}^{\otimes T}} \log \llp g_t(\bm{m}^{(t)},\,\bm{y}^{(t)};\,\bm{\xi})\rrp g\lp\bm{m}^{(t)},\,\bm{y}\ ; \ \bm{\xi}^*\rp \quad \mathrm{a.s.}~P_{\bm{\xi}^*}, 
\end{align*}
where $E_{\bm{\xi}^*}[\,\cdot\,]$ represents the expectation under the probability measure $P_{\bm{\xi}^*}$.
We can obtain an important equation \eqref{2.4}, which is needed to assure asymptotic properties, 
\begin{align}
\sup_{\bm{\xi}\in\Xi_{\varepsilon}} L(\bm{\xi})<L(\bm{\xi}^*)\qquad \mathrm{for~} \forall \varepsilon>0,
\label{2.4}
\end{align}
where $\bm{\Xi}_{\varepsilon}:=\{\bm{\xi}\in \Xi\mid |\bm{\xi}-\bm{\xi}^*|\geq \varepsilon\}$. For this proof, we need two more assumptions in addition to \eqref{2.3}: compactness of $\Xi$ and continuity of $L(\bm{\xi})$. This proof is given in Appendix \ref{A.2}. For these reasons, we focus on whether \eqref{2.6}, which is equivalent to \eqref{2.3}, holds or not in the following discussion.
\par
We assume a natural assumption that the distribution of $\bm{Y}$and the missing-data mechanism are AR($p$) ($p\geq 1$) model. By transforming $g_t(\bm{m}^{(t)},\,\bm{y}^{(t)};\ \bm{\xi})$ proficiently, we can obtain next lemma.

\begin{lemma}
\label{lem:2.2.1}
If the distribution of $\bm{Y}$and the missing-data mechanism are  AR($p$)$(p\geq 1)$ model,  the joint probability function of $(\bm{M},\,\bm{Y}^{(t)})\ (t=1,\,\ldots,\,T)$ becomes
\begin{align}
g_t(\bm{m}^{(t)},\,\bm{y}^{(t)}\mid \bm{\xi}) &=\pi(y_{1};~\bm{\theta}_1)\llp \prod_{s=2}^{t}f_s(\bm{h}^{(s/p)};~\bm{\theta}_t,\bm{\tau}_t)\rrp^{\bm{1}_{\{t\geq 2\}}}\nonumber\\
&\quad \times\llp 1- \bm{1}_{\{t\leq T-1\}}\sum_{y_{t+1}=0}^1f_{t+1}\lp[\bm{h}^{(t/(p-1))'},~y_{t+1}]';~\bm{\theta}_t,\bm{\tau}_t\rp\rrp ,
\label{2.7}
\end{align}
where 
\begin{align*}
\pi(y_{1};~\theta_1)&:= P(Y_1=y_1;~\theta_1),\\
f_t(\bm{h}^{(t/p)};~\bm{\theta}_t,\bm{\tau}_t)&:= P\lp M_{t}=0\mid M_{t-1}=0,~\bm{h}^{(t/p)};~\bm{\theta}_t,\bm{\tau}_t\rp P\lp Y_t=y_{t}\mid \bm{h}^{((t-1)/p)};~\bm{\theta}_t\rp,\\
&\hspace{90mm} \mathrm{for}~t=2,\ldots, T.
\end{align*}
\end{lemma}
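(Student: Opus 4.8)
The plan is to obtain \eqref{2.7} by computing $g_t$ directly from its definition as a marginal of the complete-data law $g$, starting from the selection-model factorisation $g(\bm{m}^{(t)},\bm{y};\bm{\xi})=P(\bm{Y}=\bm{y};\bm{\xi})\,P(\bm{M}=\bm{m}^{(t)}\mid\bm{Y}=\bm{y};\bm{\xi})$ and then summing out the missing block $\bm{y}^{(-t)}=[y_{t+1},\dots,y_T]'$. For the $\bm{Y}$-factor one applies the chain rule and the AR($p$) assumption to get $P(\bm{Y}=\bm{y};\bm{\xi})=\pi(y_1;\bm{\theta}_1)\prod_{s=2}^{T}P(Y_s=y_s\mid\bm{h}^{((s-1)/p)};\bm{\theta}_s)$. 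For the $\bm{M}$-factor, under monotone dropout the pattern $\bm{m}^{(t)}$ is exactly the event $\{M_1=\cdots=M_t=0,\ M_{t+1}=1\}$ with $M_{t+2}=\cdots=M_T=1$ forced, so factorising the missingness sequentially and using $P(M_s=1\mid M_{s-1}=1)=1$ for $s\geq t+2$ together with the AR missingness assumption gives $P(\bm{M}=\bm{m}^{(t)}\mid\bm{Y}=\bm{y};\bm{\xi})=\bigl(\prod_{s=2}^{t}P(M_s=0\mid M_{s-1}=0,y_{s-1},y_s;\bm{\tau}_s)\bigr)\,P(M_{t+1}=1\mid M_t=0,y_t,y_{t+1};\bm{\tau}_{t+1})^{\bm{1}_{\{t\leq T-1\}}}$.

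Multiplying the two factorisations and regrouping, for each $s=2,\dots,t$ the transition probability $P(Y_s=y_s\mid\bm{h}^{((s-1)/p)};\bm{\theta}_s)$ pairs with the survival probability $P(M_s=0\mid M_{s-1}=0,y_{s-1},y_s;\bm{\tau}_s)$ to form exactly $f_s(\bm{h}^{(s/p)};\bm{\theta}_s,\bm{\tau}_s)$, which together with the leading $\pi(y_1;\bm{\theta}_1)$ and the indicator $\bm{1}_{\{t\geq2\}}$ (covering the degenerate case $t=1$) reproduces the first line of \eqref{2.7}. What remains is the block $\bigl(\prod_{s=t+1}^{T}P(Y_s=y_s\mid\cdot)\bigr)\,P(M_{t+1}=1\mid M_t=0,y_t,y_{t+1};\bm{\tau}_{t+1})^{\bm{1}_{\{t\leq T-1\}}}$, which must still be summed over $y_{t+1},\dots,y_T$. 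Since the dropout factor involves the responses only through $y_t$ and $y_{t+1}$, I would carry out this sum from the top index downwards: summing out $y_T$ collapses $P(Y_T=y_T\mid\cdot)$ to $1$, then the same happens for $y_{T-1},\dots,y_{t+2}$ telescopically, leaving $\sum_{y_{t+1}}P(Y_{t+1}=y_{t+1}\mid\bm{h}^{(t/(p-1))};\bm{\theta}_{t+1})\,P(M_{t+1}=1\mid M_t=0,y_t,y_{t+1};\bm{\tau}_{t+1})$ when $t\leq T-1$, and simply $1$ when $t=T$.

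Finally, I would rewrite the residual sum by complementation: using $P(M_{t+1}=1\mid\cdot)=1-P(M_{t+1}=0\mid\cdot)$ and $\sum_{y_{t+1}}P(Y_{t+1}=y_{t+1}\mid\cdot)=1$, it equals $1-\sum_{y_{t+1}}P(M_{t+1}=0\mid M_t=0,y_t,y_{t+1};\bm{\tau}_{t+1})\,P(Y_{t+1}=y_{t+1}\mid\bm{h}^{(t/(p-1))};\bm{\theta}_{t+1})=1-\sum_{y_{t+1}}f_{t+1}\bigl([\bm{h}^{(t/(p-1))'},y_{t+1}]';\bm{\theta}_{t+1},\bm{\tau}_{t+1}\bigr)$, the last equality using $[\bm{h}^{(t/(p-1))'},y_{t+1}]'=\bm{h}^{((t+1)/p)}$. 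Re-attaching the indicator $\bm{1}_{\{t\leq T-1\}}$ to absorb the case $t=T$ then yields the second bracket in \eqref{2.7}, completing the argument. The computation is otherwise routine; the points that need care are the index bookkeeping---keeping the lag windows $\bm{h}^{(s/p)}$, $\bm{h}^{((s-1)/p)}$ and $\bm{h}^{(t/(p-1))}$ consistent, including the ``otherwise'' branch when a left index would fall below $1$---and the validity of the telescoping sum, which rests precisely on the dropout factor $P(M_{t+1}=1\mid M_t=0,\cdot)$ not depending on any $y_s$ with $s\geq t+2$; the boundary cases $t=1$ and $t=T$ should be cross-checked against the indicators $\bm{1}_{\{t\geq2\}}$ and $\bm{1}_{\{t\leq T-1\}}$.
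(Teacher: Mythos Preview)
Your proposal is correct and arrives at \eqref{2.7} by a route that is close in spirit to the paper's, but with a different organisation that is worth noting. You start from the complete-data joint $g(\bm{m}^{(t)},\bm{y};\bm{\xi})$, factor it via the selection model, and then explicitly telescope the sum over $y_T,y_{T-1},\dots,y_{t+2}$ before applying complementation at the last step. The paper instead never writes the full joint over all $T$ coordinates: it observes directly that
\[
P\bigl(\bm{M}=\bm{m}^{(t)},\bm{Y}^{(t)}=\bm{y}^{(t)}\bigr)
=P\bigl(M_t=0,\bm{Y}^{(t)}=\bm{y}^{(t)}\bigr)-\sum_{y_{t+1}}P\bigl(M_{t+1}=0,\bm{Y}^{(t+1)}=[\bm{y}^{(t)\prime},y_{t+1}]'\bigr),
\]
and then shows by a short recursion that the ``survival'' quantity $P(M_t=0,\bm{Y}^{(t)}=\bm{y}^{(t)})$ equals $\pi(y_1)\prod_{s=2}^{t}f_s(\bm{h}^{(s/p)})$. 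This sidesteps the telescoping entirely, since the future responses $y_{t+2},\dots,y_T$ never enter the computation. Your approach is perhaps more transparent about where the marginalisation actually happens, while the paper's is a bit more economical; both rely on exactly the same structural assumptions (monotone dropout, AR($p$) for $\bm{Y}$, and the missingness depending only on $(y_{t-1},y_t)$), and your careful flagging of the boundary cases $t=1$, $t=T$ and the lag-window bookkeeping is appropriate.
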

The proof is given in Appendix \ref{A.3}. By this lemma, the likelihood $L_N(\bm{\xi})$ is represented by the function of $\pi(y_{1};~\theta_1)$ and $f_t(\bm{h}^{(t/p)};~\bm{\theta}_t,\bm{\tau}_t)~(t\geq 2)$. The next theorem follows easily from the that previous lemma.
\begin{theorem}
The condition  \eqref{2.6} holds true if and only if the following conditions are met: For $t\geq 2$,
\begin{align}
&\pi_1(y_1;\theta_1)= \pi_1(y_1;\theta^*_{1})\ \ \mathrm{or}\ \  \pi_1(y_1;\theta^*_{1})=0\quad \mathrm{for}\ \forall y_1\ \Rightarrow\ \theta_1 =\theta^*_{1}  ,
\label{2.10} \\
\begin{split}
&f_t(\bm{h}^{(t/p)};~\bm{\theta}_t,\bm{\tau}_t)= f_t(\bm{h}^{(t/p)};~\bm{\theta}^*_{t},\bm{\tau}^*_{t})\ \ \mathrm{or}\ \  f_t(\bm{h}^{(t/p)};~\bm{\theta}^*_{t},\bm{\tau}^*_{t})=0\quad \mathrm{for}\ \forall \bm{h}^{(t/p)}\\
\Rightarrow\ &(\bm{\theta}_t,\bm{\tau}_t)=(\bm{\theta}^*_{t},\bm{\tau}^*_{t}).
\end{split}
\label{2.11}
\end{align}
\end{theorem}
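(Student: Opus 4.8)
The plan is to use Lemma~\ref{lem:2.2.1} to show that the observed-data distributions $\{g_t(\cdot\,;\bm{\xi})\}_{t=1}^{T}$ and the pieces $\pi(\cdot\,;\theta_1),\,f_2(\cdot\,;\bm{\theta}_2,\bm{\tau}_2),\ldots,f_T(\cdot\,;\bm{\theta}_T,\bm{\tau}_T)$ determine one another, and then to split the single implication \eqref{2.6} into the $1+(T-1)$ block implications \eqref{2.10}--\eqref{2.11}, one for $\theta_1$ and one for each $(\bm{\theta}_t,\bm{\tau}_t)$.

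First I would set $\tilde{P}_t(\bm{y}^{(t)};\bm{\xi}):=\pi(y_1;\theta_1)\prod_{s=2}^{t}f_s(\bm{h}^{(s/p)};\bm{\theta}_s,\bm{\tau}_s)$ (empty product $=1$), which is the probability $P(M_t=0,\,Y_1=y_1,\ldots,Y_t=y_t;\bm{\xi})$ of being observed through time $t$ with that $Y$-record, so that $\tilde{P}_1=\pi$ and $\tilde{P}_T=g_T$. By Lemma~\ref{lem:2.2.1}, \eqref{2.7} reads $g_t(\bm{m}^{(t)},\bm{y}^{(t)};\bm{\xi})=\tilde{P}_t(\bm{y}^{(t)};\bm{\xi})-\sum_{y_{t+1}}\tilde{P}_{t+1}((\bm{y}^{(t)},y_{t+1});\bm{\xi})$ for $t\le T-1$, which telescopes and inverts to $\tilde{P}_t(\bm{y}^{(t)};\bm{\xi})=\sum_{s=t}^{T}\sum_{\bm{y}^{(s)}:\,(y_1,\ldots,y_t)=\bm{y}^{(t)}}g_s(\bm{m}^{(s)},\bm{y}^{(s)};\bm{\xi})$; hence $\{g_t\}$ and $\{\tilde{P}_t\}$ are interderivable. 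Moreover the premise of \eqref{2.6} forces $g_t(\bm{m}^{(t)},\bm{y}^{(t)};\bm{\xi})=g_t(\bm{m}^{(t)},\bm{y}^{(t)};\bm{\xi}^*)$ for \emph{all} $t,\bm{y}^{(t)}$: it gives $g_t(\cdot\,;\bm{\xi})\ge g_t(\cdot\,;\bm{\xi}^*)$ pointwise, and both sides sum to $1$, so no slack remains. Thus the premise of \eqref{2.6} is equivalent to ``$\tilde{P}_t(\cdot\,;\bm{\xi})=\tilde{P}_t(\cdot\,;\bm{\xi}^*)$ for every $t$.''

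Next I would peel off the factors. At $t=1$ this identity is $\pi(\cdot\,;\theta_1)=\pi(\cdot\,;\theta_1^*)$. For $t\ge 2$, since the conditional probabilities defining the model take values in $(0,1)$ (as for the logistic specification in the Example) and $\bm{\xi}^*$ is an interior point, every factor $\pi(\cdot\,;\theta_1^*)$ and $f_s(\cdot\,;\bm{\theta}_s^*,\bm{\tau}_s^*)$ is strictly positive, whence $\tilde{P}_{t-1}(\cdot\,;\bm{\xi}^*)>0$ and $\tilde{P}_{t-1}(\cdot\,;\bm{\xi})=\tilde{P}_{t-1}(\cdot\,;\bm{\xi}^*)>0$; using $\tilde{P}_t=\tilde{P}_{t-1}\,f_t(\bm{h}^{(t/p)};\bm{\theta}_t,\bm{\tau}_t)$, cancelling this common positive factor, and letting $\bm{y}^{(t)}$ range over all vectors with the prescribed tail gives $f_t(\bm{h}^{(t/p)};\bm{\theta}_t,\bm{\tau}_t)=f_t(\bm{h}^{(t/p)};\bm{\theta}_t^*,\bm{\tau}_t^*)$ for every $\bm{h}^{(t/p)}$; the converse (equality of $\pi$ and of all $f_t$ forces equality of all $\tilde{P}_t$, hence of all $g_t$) is immediate. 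Positivity also makes the ``or $=0$'' alternatives in \eqref{2.10}--\eqref{2.11} vacuous. Therefore the premise of \eqref{2.6} is \emph{equivalent} to ``$\pi(\cdot\,;\theta_1)=\pi(\cdot\,;\theta_1^*)$ and $f_t(\cdot\,;\bm{\theta}_t,\bm{\tau}_t)=f_t(\cdot\,;\bm{\theta}_t^*,\bm{\tau}_t^*)$ for $t=2,\ldots,T$'', while $\bm{\xi}=\bm{\xi}^*$ is ``$\theta_1=\theta_1^*$ and $(\bm{\theta}_t,\bm{\tau}_t)=(\bm{\theta}_t^*,\bm{\tau}_t^*)$ for $t=2,\ldots,T$''. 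Since each premise conjunct constrains only its own block, \eqref{2.6} holds for all $\bm{\xi}$ if and only if \eqref{2.10} and every \eqref{2.11} hold; for the ``only if'' half one just checks that the premise of \eqref{2.6} survives perturbing one block with the others frozen at $\bm{\xi}^*$.

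The step I expect to be the main obstacle is the handling of the vanishing-probability clauses. The upgrade ``premise of \eqref{2.6} $\Rightarrow$ $g_t(\cdot\,;\bm{\xi})=g_t(\cdot\,;\bm{\xi}^*)$ everywhere'' is genuinely needed and rests only on nonnegativity and unit total mass; but the division that isolates each $f_t$ needs $\tilde{P}_{t-1}(\cdot\,;\bm{\xi}^*)$ to be nowhere zero, i.e.\ positivity of $\pi(\cdot\,;\theta_1^*)$ and of all $f_s(\cdot\,;\bm{\theta}_s^*,\bm{\tau}_s^*)$. So I would establish that positivity first (it holds whenever the conditional probabilities are $(0,1)$-valued, in particular at an interior $\bm{\xi}^*$ of the logistic model), and then discard the ``or $=0$'' alternatives throughout; were some factor allowed to vanish, one would recover $f_t$ only on the histories that carry positive $\bm{\xi}^*$-probability, and the equivalence would survive only in its disjunctive form.
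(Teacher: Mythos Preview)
Your argument is correct and is exactly the kind of elaboration the paper has in mind: the paper itself offers no proof beyond the one-line remark ``The next theorem follows easily from the previous lemma,'' and your write-up supplies precisely the steps that make this ``easy'' consequence explicit. The introduction of $\tilde{P}_t=\pi\prod_{s\le t}f_s$ and the telescoping inversion $\tilde{P}_t=\sum_{s\ge t}\sum g_s$ are the natural way to read Lemma~\ref{lem:2.2.1}, and your observation that the disjunctive premise of \eqref{2.6} upgrades to pointwise equality (because $g_t(\cdot;\bm{\xi})\ge g_t(\cdot;\bm{\xi}^*)$ and both sum to $1$) is a clean way to dispose of the ``or $=0$'' clauses that the paper leaves unaddressed; the blockwise ``only if'' via freezing all but one block is likewise the intended mechanism.
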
  

All we have to do is to check condition \eqref{2.10} and \eqref{2.11}. The condition \eqref{2.10} is obvious seen from the definition \eqref{pi} and hence we consider only \eqref{2.11}. For example, if $p=1,~t=2$, 
\begin{align*}
f_2(\bm{h}^{(2/1)};~\bm{\theta}_2,\bm{\tau}_2)=P\lp M_{2}=0\mid M_{1}=0,~y_1,y_2;~\bm{\tau}_2\rp P\lp Y_t=y_{2}\mid y_1;~\bm{\theta}_2\rp.
\end{align*}
For $p=1,~t=2$, \eqref{2.11} is equivalent to the following condition:
\begin{align}
&P\lp M_{2}=0\mid M_{1}=0,\,i,\,j\,;\,\bm{\tau}_2\rp P\lp Y_2=j\mid i\,;~\bm{\theta}_2\rp \label{2.12}\\
&= P\lp M_{2}=0\mid M_{1}=0,\,i,\,j\,;\,\bm{\tau}^*_{2}\rp P\lp Y_2=j\mid i\,;~\bm{\theta}^*_{2}\rp\quad \mathrm{for}~\forall i,\,j=0,\,1 \nonumber\\
\Rightarrow& (\bm{\theta}_2,\bm{\tau}_2)= (\bm{\theta}^*_{2},\bm{\tau}^*_{2})\nonumber
\end{align}
There are 4 constraints in \eqref{2.12} and if all of them were linear equations, the number of parameters had to be smaller than or equal 4. Since we have 5 parameters in the Logistic AR(1) model, this would imply that the model did not have identifiability. In binary data analysis, they are usually non-linear expressions, but it is worth verifying this condition; that is to say in general whether  dim($\bm{\xi}_t$), representing the number of parameters used at time $t$, is smaller than or equal to $2^{\mathrm{dim}(\bm{h}^{(t/p)})}$, representing the number of constraints. Note generally that, we have to check all the above expressions. 

\subsection{Identifiability of Logistic AR(1) model}
In the logistic AR(1) model defined in \eqref{Logistic}-\eqref{pi}, there are 5 parameters at each time $t~(t\geq 2)$. From the previous discussion, it seems that logistic AR(1) model dose not have identiability, and  in fact, it does not. 
\begin{proposition}
\label{prop:2.2.3}
The logistic AR(1) model defined in \eqref{Logistic}-\eqref{pi} does not have identifiability. 
\end{proposition}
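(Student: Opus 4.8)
The plan is to invoke the Theorem: because $\pi_1$ in \eqref{pi} is a free Bernoulli parameter, \eqref{2.10} is automatic, so it suffices to make \eqref{2.11} fail at one time index, and I take $t=2$, $p=1$. Two observations turn this into a finite check. First, every factor $\pi$ and $f_s$ produced by \eqref{Logistic}--\eqref{pi} is a product of $\mathrm{expit}$-type terms and therefore lies strictly in $(0,1)$ for any finite values of the coefficients; hence the escape clause ``$f_t(\cdot)=0$'' in \eqref{2.11} is vacuous and \eqref{2.11} is just equality of $f_2$. Second, by Lemma \ref{lem:2.2.1} the block $\bm{\xi}_2=[\theta_{20},\theta_{21},\tau_{20},\tau_{21},\tau_{22}]'$ enters every $g_t$ only through the single function $f_2(\bm{h}^{(2/1)};\bm{\theta}_2,\bm{\tau}_2)$ on $\{0,1\}^2$ (for $t=1$ through the boundary sum $\sum_{y_2}f_2$, for $t\ge 2$ as the $s=2$ factor of $\prod_{s=2}^t f_s$). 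Thus, if I keep $\theta_1=\theta_1^*$ and $\bm{\xi}_t=\bm{\xi}_t^*$ for $t\ge 3$ and move $\bm{\xi}_2$ off $\bm{\xi}_2^*$ without changing the four numbers $f_2(i,j)$, $i,j\in\{0,1\}$, then $g_1,\dots,g_T$ are left literally unchanged while $\bm{\xi}\neq\bm{\xi}^*$, so \eqref{2.6} fails and by the Theorem the model is not identifiable. Writing $a_i:=\mathrm{expit}(\theta_{20}+\theta_{21}i)=P(Y_2=1\mid i)$ and $b_{ij}:=1-\mathrm{expit}(\tau_{20}+\tau_{21}i+\tau_{22}j)$, these four numbers are $f_2(i,0)=b_{i0}(1-a_i)$ and $f_2(i,1)=b_{i1}a_i$.

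The concrete witness I would record: take the symmetric true value $\theta_{20}^*=\theta_{21}^*=\tau_{20}^*=\tau_{21}^*=\tau_{22}^*=0$ (the $t=2$ block of $\bm{\xi}^*$; the other blocks may be any interior point, e.g.\ $0$), so $a_0^*=a_1^*=b_{ij}^*=\tfrac12$ and $f_2^*(i,j)=\tfrac14$ for all $i,j$. For each $a\in(\tfrac14,\tfrac34)$ put
\begin{align*}
\theta_{20}=\mathrm{logit}(a),\quad \theta_{21}=\tau_{21}=0,\quad \tau_{20}=-\mathrm{logit}\!\Bigl(\tfrac{1}{4(1-a)}\Bigr),\quad \tau_{22}=\mathrm{logit}\!\Bigl(\tfrac{1}{4(1-a)}\Bigr)-\mathrm{logit}\!\Bigl(\tfrac{1}{4a}\Bigr).
\end{align*}
Then $a_0=a_1=a$, $b_{00}=b_{10}=\tfrac{1}{4(1-a)}\in(0,1)$, $b_{01}=b_{11}=\tfrac{1}{4a}\in(0,1)$, so all probabilities are legitimate, and $f_2(i,j)=\tfrac14$ for every $i,j$. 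For $a\neq\tfrac12$ this $\bm{\xi}_2$ lies in the interior of $\Xi$ and differs from $\bm{\xi}_2^*$, so \eqref{2.11} is violated and Proposition \ref{prop:2.2.3} follows.

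For the stronger assertion that identifiability fails at \emph{every} interior true value, I would argue that $\Phi:\bm{\xi}_2\mapsto(f_2(i,j;\bm{\xi}_2))_{i,j}\in(0,1)^4$ is nowhere locally injective: either by invariance of domain (no continuous injection from an open subset of $\mathbb{R}^5$ into $\mathbb{R}^4$ exists), or, more explicitly, by reparametrizing $\bm{\xi}_2$ through $(a_0,a_1,b_{00},b_{01},b_{10},b_{11})$ --- a diffeomorphism of $\mathbb{R}^5$ onto the $5$-dimensional manifold $(0,1)^2\times\{\,\mathrm{logit}\,b_{00}-\mathrm{logit}\,b_{01}-\mathrm{logit}\,b_{10}+\mathrm{logit}\,b_{11}=0\,\}$, the last identity being forced by the $M_2$-logit's linearity in $(1,i,j)$ --- and differentiating the four relations $f_2(i,j)=b_{ij}a_i^{\,j}(1-a_i)^{1-j}$ together with that identity; eliminating the $db_{ij}$ leaves a single scalar relation between $da_0$ and $da_1$ with strictly positive coefficients, so $\ker D\Phi$ is one-dimensional at every point and the level sets of $\Phi$ are curves through which distinct parameters yield identical $f_2$.

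The work is all bookkeeping rather than hard analysis, and the two places to be careful are: (i) reading Lemma \ref{lem:2.2.1} closely enough --- including the $p-1=0$ edge case in $\bm{h}^{(t/(p-1))}$ --- to confirm that $\bm{\xi}_2$ reaches the observed-data likelihood only through the four values $f_2(i,j)$, so that moving along a level set of $\Phi$ genuinely fixes all of $g_1,\dots,g_T$; and (ii), for the general version, checking that the stated reparametrization is a diffeomorphism onto the constrained manifold and that the eliminated linear system has the asserted rank and sign pattern. With these in hand, the Theorem converts the failure of \eqref{2.11} at $t=2$ into the failure of \eqref{2.6}, i.e.\ non-identifiability.
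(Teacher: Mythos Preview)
Your proposal is correct and takes a genuinely different route from the paper. The paper attacks an \emph{arbitrary} true value by explicit algebra: reparametrizing via $(a_{20},a_{21},b_{20},b_{21},b_{22}):=(e^{\theta_{20}},e^{\theta_{21}},e^{-\tau_{20}},e^{-\tau_{21}},e^{-\tau_{22}})$, it writes the four equalities in \eqref{2.12} as products of the shape $(1+a_{20}\cdots)(1+b_{20}\cdots)=(1+a_{20}^*\cdots)(1+b_{20}^*\cdots)$, fixes $a_{20}=\tilde a_{20}\neq a_{20}^*$ in a suitable open interval, and then solves the system one equation at a time to express each of $b_{20},b_{22},b_{21},a_{21}$ as explicit rational functions of $(\tilde a_{20},a_{20}^*,a_{21}^*,b_{20}^*,b_{21}^*,b_{22}^*)$. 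This yields, for every interior $\bm{\xi}^*$, a one-parameter family of alternatives, fully constructively.

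You instead (i) exhibit a transparent single witness at the symmetric point, and (ii) cover the general case by a dimension/rank argument. The gain in (i) is that the check $f_2(i,j)=\tfrac14$ is immediate, at the price of handling only one true value; the gain in (ii) is conceptual economy, at the price of invoking invariance of domain or the constant-rank theorem rather than displaying the alternative parameters explicitly. One caution on your invariance-of-domain sentence: it precludes injectivity of $\Phi$ on any open set, but does not by itself produce a partner for the \emph{specific} $\bm{\xi}_2^*$; your subsequent rank computation (one-dimensional $\ker D\Phi$ everywhere, hence one-dimensional fibres through every point) is what actually delivers the ``every $\bm{\xi}^*$'' statement, so lead with that rather than with the topological shortcut.
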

\begin{proof}
For simplicity we write
\begin{align*}
(a_{20},a_{21},b_{20},b_{21},b_{22})&:=(\exp(\theta_{20}),\exp(\theta_{21}),\exp(-\tau_{20}),\exp(-\tau_{21}),\exp(-\tau_{22}))\\
(a^*_{20},a^*_{21},b^*_{20},b^*_{21},b^*_{22})&:=(\exp(\theta^*_{20}),\exp(\theta^*_{21}),\exp(-\tau^*_{20}),\exp(-\tau^*_{21}),\exp(-\tau^*_{22}))
\end{align*}
and prove it only for $t=2$, i.e., for $(a^*_{20},a^*_{21},b^*_{20},b^*_{21},b^*_{22})$ there exists $(a_{20},a_{21},$ $b_{20},b_{21},b_{22})\neq (a^*_{20},a^*_{21},b^*_{20},b^*_{21},b^*_{22})$ such that \eqref{2.12} holds. To show the result, we fix $a_{20}$ some value(say, $\tilde{a}_{20}$) which is not $a^*_{20}$ and prove that \eqref{2.12} holds if and only if the rest parameters $(a_{21},b_{20},b_{21},b_{22})$ is written by a function of ($\tilde{a}_{20},a^*_{20},$ $a^*_{21},b^*_{20},b^*_{21},b^*_{22}$), which shows that the logistic AR(1) model is not identified.

In the logistic AR(1) model, \eqref{2.12} is represented as
\begin{numcases}
  {}
 (1+a_{20})(1+b_{20})=(1+a^*_{20})(1+b^*_{20}) & \label{2.13} \\
 \lp1+\frac{1}{a_{20}}\rp(1+b_{20}b_{22})=\lp1+\frac{1}{a^*_{20}}\rp(1+b^*_{20}b^*_{22}) &\label{2.14} \\
 (1+a_{20}a_{21})(1+b_{20}b_{21})=(1+a^*_{20}a^*_{21})(1+b^*_{20}b^*_{21}) &\label{2.15}\\
\lp1+\frac{1}{a_{20}a_{21}}\rp(1+b_{20}b_{21}b_{22})=\lp1+\frac{1}{a^*_{20}a^*_{21}}\rp(1+b^*_{20}b^*_{21}b^*_{22})&\label{2.16}
\end{numcases}
where $a_{20}=\tilde{a}_{20}$ and all the parameters $a_{20},\,a_{21},\,b_{20},\,b_{21},\,b_{22},\,a_{20}^*,\,a^*_{21},\,b^*_{20},\,b^*_{21},\,b^*_{22}$ are positive. 
By \eqref{2.13}, we have
\begin{align}
b_{20}=\frac{(1+a^*_{20})(1+b^*_{20})}{1+\tilde{a}_{20}}-1
\label{2.17}
\end{align}
and by \eqref{2.14},
\begin{align}
b_{22}=\frac{\tilde{a}_{20}(1+a^*_{20})(1+b^*_{20}b^*_{22})-a^*_{22}(1+\tilde{a}_{20})}{a^*_{20}\{(1+a^*_{20})(1+b^*_{20})-(1+\tilde{a}_{20})\}}.
\label{2.18}
\end{align}
To guarantee that $b_{20},b_{22}>0$, $\tilde{a}_{20}$ must satisfy
\begin{align}
\frac{a^*_{20}}{1+b^*_{20}b^*_{22}(1+a^*_{20})}<\tilde{a}_{20}<(1+a^*_{20})(1+b^*_{20})-1.
\label{2.19}
\end{align}
We assume this condition for $\tilde{a}_{20}$. By multiplying $\tilde{a}_{20}a_{21}b^*_{20}b^*_{21}$ \eqref{2.16} and dividing by $\eqref{2.17}$, we obtain
\begin{align}
a_{21}=\frac{a^*_{20}a^*_{21}(1+b^*_{20}b^*_{21})(1+b_{20}b_{21}b_{22})}{(1+b^*_{20}b^*_{21}b^*_{22})(1+b_{20}b_{21})}.
\label{2.20}
\end{align}
By replacing $b_{20},b_{22}$ with \eqref{2.17}, \eqref{2.18}, we have
\begin{align}
b_{21}=\frac{b^*_{20}b^*_{21}\llp a^*_{20}a^*_{21}b^*_{22}(b^*_{20}b^*_{21}+1)+b^*_{20}b^*_{21}b^*_{22}+1 \rrp}
{\lp \frac{(a^*_{20}+1)(b^*_{20}+1)}{\tilde{a}_{20}+1}-1 \rp 
\lp \frac{a^*_{20}a^*_{21}(b^*_{20}b^*_{21}+1)\lllp \tilde{a}_{20}\llp (a^*_{20}+1)b^*_{20}b^*_{22}+1 \rrp -a^*_{20}\rrrp}{a^*_{20}(a^*_{20}+1)(b^*_{20}+1)-(\tilde{a}_{20}+1)} \rp}.
\label{2.21}
\end{align}
Furthermore, by substituting \eqref{2.17}, \eqref{2.18}, and \eqref{2.20} for \eqref{2.21}, we can obtain $a_{21}$ by a function of 
($\tilde{a}_{20},a^*_{20},a^*_{21},b^*_{20},b^*_{21},b^*_{22}$). Therefore,  all the parameters $a_{20},a_{21},b_{20},b_{21},b_{22}$ are represented by them.
\end{proof}
\subsection{Identifiability of Logistic AR(2) model}
Usually, if smaller models e.g., AR(1) model, do not have identifiability, neither do larger models e.g., AR(2) model. The general theory is not always true, however, since poor information at time $t$ makes the Logistic AR(1) model unidentified, use of past information may make the Logistic AR(2) model identified. Recall that the AR(1) model defined in \eqref{Logistic}-\eqref{pi} is not identified because the model imposes  $2^2=4$ constraints  in \eqref{2.12} with the larger number 5 of the parameters at each time $t~(t\geq 2)$  in the model;  see the graphical model in Figure \ref{fig:2.1}. Here the parameters enclosed by a broken line denote intercepts in the model and the others parameters denote each coefficient of the nearest arrow in the model. As we have already developed the expression for the joint distribution function of  AR($p$) model in Lemma \ref{lem:2.2.1}, if $t=3,p=2$, one parameter is added about $\bm{Y}$'s serial correlation: we have 6 parameters against $2^3=8(>6)$ constraints. It is seen visually in Figure \ref{fig:2.2}. Thus, for $t\geq 3$, we can claim that \eqref{2.11}  holds as follows. 
\begin{figure}[htbp]
 \begin{minipage}{0.45\hsize}
  \begin{center}
  \includegraphics[scale=0.5]{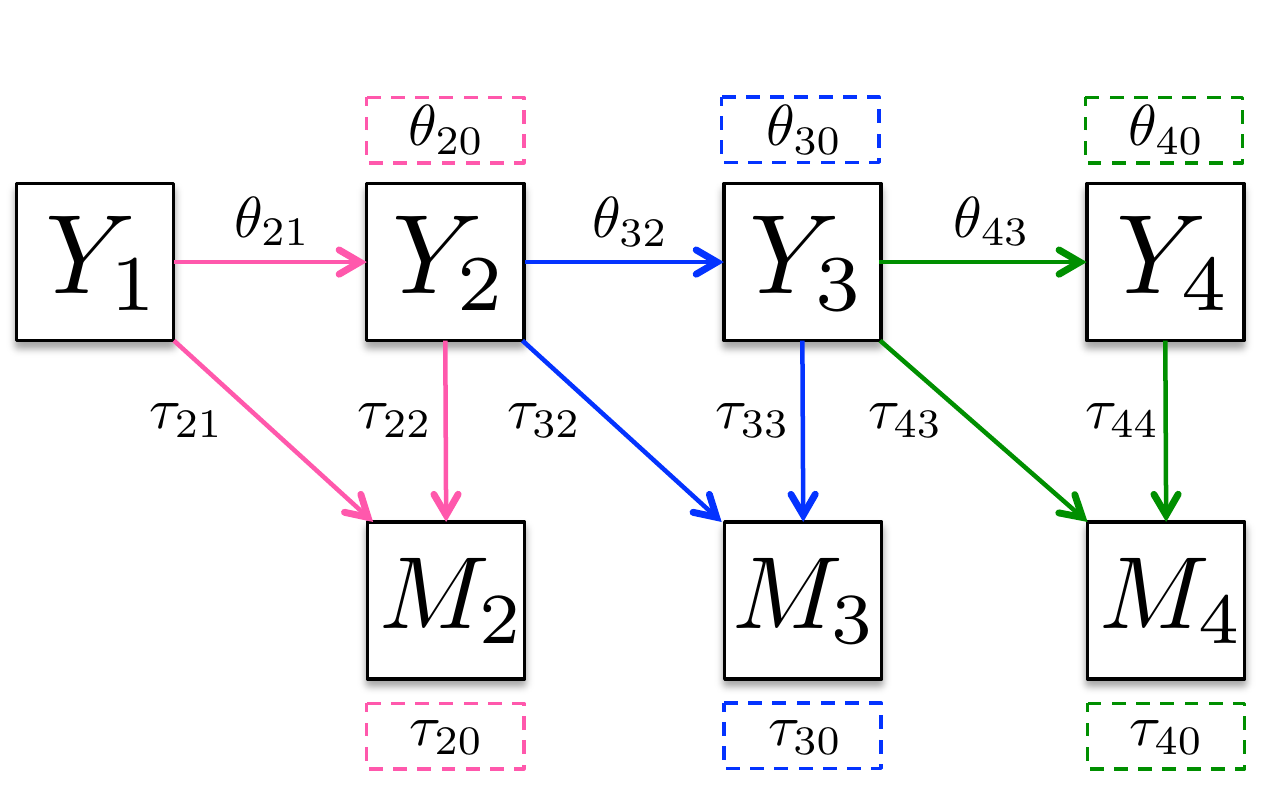}
  \end{center}
  \caption{AR(1) model}
  \label{fig:2.1}
 \end{minipage}
 \begin{minipage}{0.45\hsize}
  \begin{center}
  \includegraphics[scale=0.5]{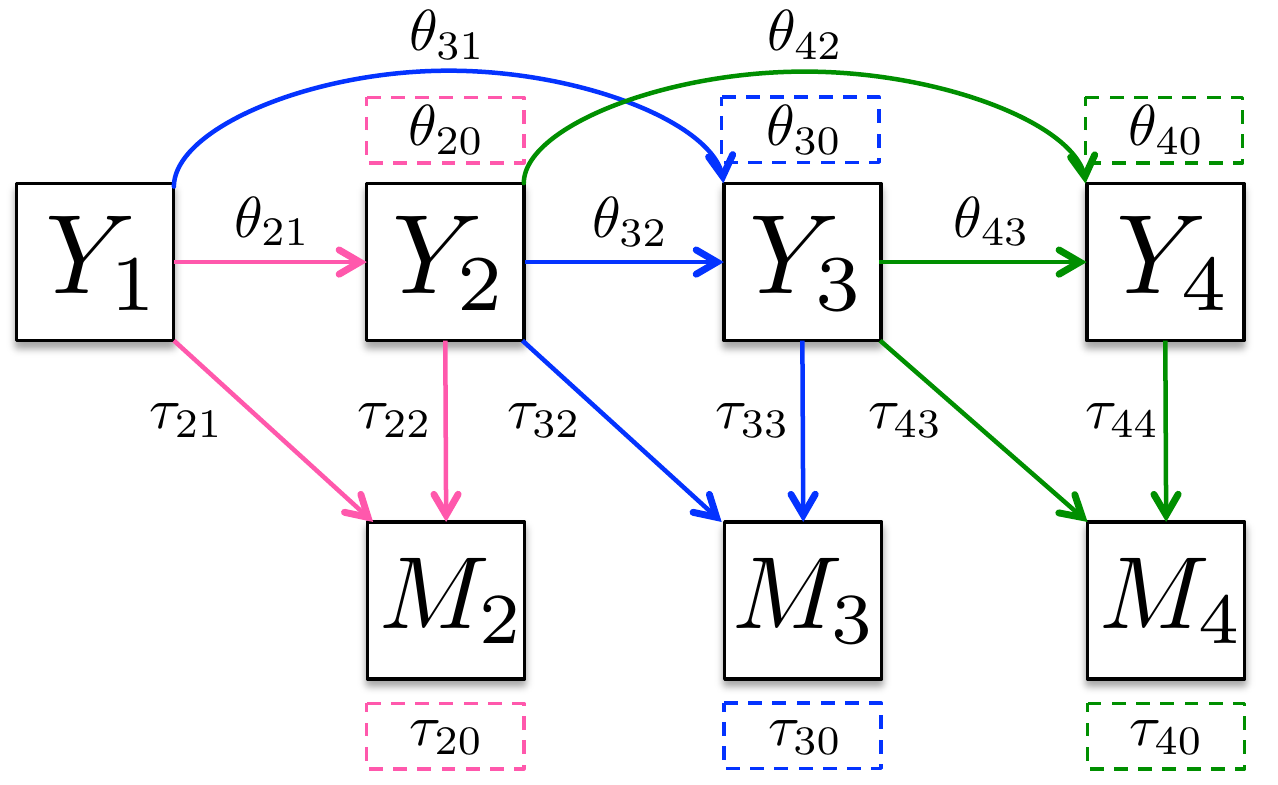}
  \end{center}
  \caption{AR(2) model}
  \label{fig:2.2}
 \end{minipage}
\end{figure}
\begin{proposition}
\label{prop:2.2.4}
In logistic AR(2) model, for $t\geq 3$, 
\begin{align*}
f_t(\bm{h}^{(t/2)};~\bm{\theta}_t,\bm{\tau}_t) = f_t(\bm{h}^{(t/2)};~\bm{\theta}^*_{t},\bm{\tau}^*_{t})\ \ \mathrm{for}\ \forall \bm{h}^{(t/2)} \ &\Rightarrow\ 
(\bm{\theta}_t,\bm{\tau}_t) = (\bm{\theta}^*_{t},\bm{\tau}^*_{t}) 
\end{align*}
holds if and only if  $\theta_{tt-2}$ is not zero.
\end{proposition}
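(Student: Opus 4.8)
The plan is to make the eight defining equations explicit, extract the coefficient $a_{tt-2}:=\exp(\theta_{tt-2})$ out of the missing-data factors by a single ratio identity, and then split the ``if and only if'' into its two directions. Following the notation of the proof of Proposition~\ref{prop:2.2.3}, I would put $a_{t0}=\exp(\theta_{t0})$, $a_{tt-1}=\exp(\theta_{tt-1})$, $a_{tt-2}=\exp(\theta_{tt-2})$, $b_{t0}=\exp(-\tau_{t0})$, $b_{tt-1}=\exp(-\tau_{tt-1})$, $b_{tt}=\exp(-\tau_{tt})$, with starred symbols for $(\bm\theta^*_t,\bm\tau^*_t)$. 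For $t\ge 3$ one has $\bm h^{(t/2)}=[y_{t-2},y_{t-1},y_t]'$, so writing $(k,i,j)=(y_{t-2},y_{t-1},y_t)$ and abbreviating $\phi_{kij}:=f_t([k,i,j]';\bm\theta_t,\bm\tau_t)$, with $\phi^*_{kij}$ for the value at $(\bm\theta^*_t,\bm\tau^*_t)$, Lemma~\ref{lem:2.2.1} gives
\begin{align*}
\phi_{kij}=\frac{b_{t0}\,b_{tt-1}^{\,i}\,b_{tt}^{\,j}}{1+b_{t0}\,b_{tt-1}^{\,i}\,b_{tt}^{\,j}}\cdot\frac{\bigl(a_{t0}\,a_{tt-1}^{\,i}\,a_{tt-2}^{\,k}\bigr)^{j}}{1+a_{t0}\,a_{tt-1}^{\,i}\,a_{tt-2}^{\,k}}.
\end{align*}
Every factor is a value of expit, so $\phi^*_{kij}>0$; hence the ``$=0$'' alternative in \eqref{2.11} is vacuous and the hypothesis of the proposition is exactly the system $\phi_{kij}=\phi^*_{kij}$ over $(k,i,j)\in\{0,1\}^{\otimes 3}$.

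The key step is that, for each fixed $i$, the $b$-factors cancel in $\phi_{0i0}/\phi_{1i0}$ and in $\phi_{0i1}/\phi_{1i1}$, and the quotient of these two ratios is $1/a_{tt-2}$; matching with the starred system therefore forces $a_{tt-2}=a^*_{tt-2}$, \emph{with no restriction whatsoever on $\theta^*_{tt-2}$}. This identity is the engine of the proof, and it already shows where the hypothesis must enter: substituting $a_{tt-2}=a^*_{tt-2}$ back into $\phi_{0i0}/\phi_{1i0}=\phi^*_{0i0}/\phi^*_{1i0}$ and clearing denominators yields $(a_{t0}-a^*_{t0})(1-a_{tt-2})=0$ when $i=0$, and the analogous manipulation of $\phi_{010}/\phi_{110}=\phi^*_{010}/\phi^*_{110}$ yields $a_{t0}(a_{tt-1}-a^*_{tt-1})(1-a_{tt-2})=0$, so $a_{t0}=a^*_{t0}$ and $a_{tt-1}=a^*_{tt-1}$ \emph{provided the factor $1-a_{tt-2}$ can be divided out}.

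If $\theta^*_{tt-2}\neq 0$ then $a_{tt-2}=a^*_{tt-2}\neq 1$, so the previous paragraph gives $\bm\theta_t=\bm\theta^*_t$; then $\phi_{000}=\phi^*_{000}$, $\phi_{001}=\phi^*_{001}$ and $\phi_{010}=\phi^*_{010}$ collapse successively to $b_{t0}=b^*_{t0}$, then $b_{tt}=b^*_{tt}$, then $b_{tt-1}=b^*_{tt-1}$, i.e.\ $\bm\tau_t=\bm\tau^*_t$, which is the ``if'' direction. Conversely, if $\theta^*_{tt-2}=0$ then $\phi^*_{kij}$ does not depend on $k$; the displayed ratio still forces $a_{tt-2}=a^*_{tt-2}=1$, so $\phi_{kij}$ loses its $k$-dependence too, and the eight equations reduce to the four equations in $(a_{t0},a_{tt-1},b_{t0},b_{tt-1},b_{tt})$ that are precisely the AR(1) system \eqref{2.12} read at time $t$. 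By the computation in the proof of Proposition~\ref{prop:2.2.3}, that system admits a one-parameter family of solutions distinct from the true value; appending $\theta_{tt-2}=0$ (that is, $a_{tt-2}=1$) to any of them produces a parameter $(\bm\theta_t,\bm\tau_t)\neq(\bm\theta^*_t,\bm\tau^*_t)$ satisfying all eight equations, so the implication fails.

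The conceptual content is thus entirely contained in the one ratio identity isolating $a_{tt-2}$. The only genuinely fiddly parts, which I would not expect to be deep, are the denominator-clearing steps that turn $\phi_{0i0}/\phi_{1i0}=\phi^*_{0i0}/\phi^*_{1i0}$ into $(a_{t0}-a^*_{t0})(1-a_{tt-2})=0$ and its $i=1$ analogue, and, for the converse, checking that the non-trivial AR(1) solutions constructed in the proof of Proposition~\ref{prop:2.2.3} stay in the admissible (positive) region once $\theta_{tt-2}=0$ is appended --- which they do, since the admissibility of those five parameters does not involve $a_{tt-2}$.
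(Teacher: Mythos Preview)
Your proposal is correct and follows essentially the same algebraic route as the paper. Both arguments write out the eight constraints, cancel the missing-mechanism factors by taking ratios across the index $k=y_{t-2}$ (the paper divides \eqref{2.23} by \eqref{2.24} and \eqref{2.26} by \eqref{2.25}; you form $\phi_{0i0}/\phi_{1i0}$ and $\phi_{0i1}/\phi_{1i1}$), isolate $a_{tt-2}=a^*_{tt-2}$ from a second ratio, and then substitute back to obtain the pivotal factorisation $(a_{t0}-a^*_{t0})(1-a_{tt-2})=0$; the remaining parameters are then read off from the original equations in the same order. Your ``only if'' direction is in fact slightly more careful than the paper's: the paper simply observes that when $a^*_{tt-2}=1$ equation~\eqref{2.29} no longer pins down $a_{t0}$ and declares non-identifiability, whereas you make explicit that the eight equations collapse to the four-equation AR(1) system \eqref{2.12} and invoke Proposition~\ref{prop:2.2.3} to exhibit the one-parameter family of alternative solutions. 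One cosmetic slip: the quotient $(\phi_{0i0}/\phi_{1i0})\big/(\phi_{0i1}/\phi_{1i1})$ equals $a_{tt-2}$ rather than $1/a_{tt-2}$, but this does not affect the conclusion $a_{tt-2}=a^*_{tt-2}$.
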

\begin{proof}
The result what we want obtain is, for every $(y_{t-2},y_{t-1},y_t)' \in \{0,1\}^{\otimes 3}$,
\begin{align}
&\frac{1}{1+\exp\{-(\tau_{t0}+\tau_{tt-1}y_{t-1}+\tau_{33}y_{t})\}}\frac{1}{1+\exp\{(-1)^{y_{t}}(\theta_{t0}+\theta_{tt-2}y_{t-2}+\theta_{tt-1}y_{t-1})\}}\nonumber\\
&=\frac{1}{1+\exp\{-(\tau^*_{t0}+\tau^*_{tt-1}y_{t-1}+\tau^*_{33}y_{t})\}}\frac{1}{1+\exp\{(-1)^{y_{t}}(\theta^*_{t0}+\theta^*_{tt-2}y_{t-2}+\theta^*_{tt-1}y_{t-1})\}}
\label{2.22}\\
\Rightarrow &  (\bm{\theta}_3,\bm{\tau}_3)= (\bm{\theta}^*_{3},\bm{\tau}^*_{3}). \nonumber
\end{align} 
We prove only for $t=3$ since the above expression is same for $t\geq 3$. \eqref{2.22} is equivalent to following 8 expressions.
\begin{numcases}
  {}
 (1+a_{30})(1+b_{30})=(1+a^*_{30})(1+b^*_{30}) & \label{2.23} \\
 (1+a_{30}a_{31})(1+b_{30})=(1+a^*_{30}a^*_{31})(1+b^*_{30}) & \label{2.24} \\
 \lp1+\frac{1}{a_{30}}\rp(1+b_{30}b_{33})=\lp1+\frac{1}{a^*_{30}}\rp(1+b^*_{30}b^*_{33}) &\label{2.25} \\
  \lp1+\frac{1}{a_{30}a_{31}}\rp(1+b_{30}b_{33})=\lp1+\frac{1}{a^*_{30}a^*_{31}}\rp(1+b^*_{30}b^*_{33}) &\label{2.26} \\
 (1+a_{30}a_{32})(1+b_{30}b_{32})=(1+a^*_{30}a^*_{32})(1+b^*_{30}b^*_{32}) &\label{2.27}\\
 (1+a_{30}a_{31}a_{32})(1+b_{30}b_{32})=(1+a^*_{30}a^*_{31}a^*_{32})(1+b^*_{30}b^*_{32}) &\label{2.28}\\
\lp1+\frac{1}{a_{30}a_{32}}\rp(1+b_{30}b_{32}b_{33})=\lp1+\frac{1}{a^*_{30}a^*_{32}}\rp(1+b^*_{30}b^*_{32}b^*_{33})&\nonumber \\
\lp1+\frac{1}{a_{30}a_{31}a_{32}}\rp(1+b_{30}b_{32}b_{33})=\lp1+\frac{1}{a^*_{30}a^*_{31}a^*_{32}}\rp(1+b^*_{30}b^*_{32}b^*_{33})&\nonumber
\end{numcases}
where
\begin{align*}
(a_{30},a_{31},a_{32},b_{30},b_{32},b_{33})&:=(\exp(\theta_{30}),\exp(\theta_{31}),\exp(\theta_{32}),\exp(-\tau_{30}),\exp(-\tau_{32}),\exp(-\tau_{33}))\\
(a^*_{30},a^*_{31},a^*_{32},b^*_{30},b^*_{32},b^*_{33})&:=(\exp(\theta^*_{30}),\exp(\theta^*_{31}),\exp(\theta^*_{32}),\exp(-\tau^*_{30}),\exp(-\tau^*_{32}),\exp(-\tau^*_{33})).
\end{align*}
By dividing \eqref{2.23} by \eqref{2.24} and \eqref{2.26} by \eqref{2.25}, we have
\begin{align}
\frac{1+a_{30}}{1+a_{30}a_{31}}&=\frac{1+a^*_{30}}{1+a^*_{30}a^*_{31}},
\label{2.29} \\
\frac{a_{31}(1+a_{30})}{1+a_{30}a_{31}}&=\frac{a^*_{31}(1+a^*_{30})}{1+a^*_{30}a^*_{31}}.
\label{2.30}
\end{align}
By substituting \eqref{2.29} for \eqref{2.30}, we obtain $a_{31}=a^*_{31}$. Using this equation for \eqref{2.29} again,
\begin{align*}
(a^*_{31}-1)(a_{30}-a^*_{30})=0.
\end{align*}
Thus, if $a^*_{31}\neq 1$, $a_{30}=a^*_{30}$, otherwise $a_{30}=c$, where $c$ is an arbitrary positive constant. Hence, when $a^*_{31}= 1$, the model does not have identifiability. If $a^*_{31}\neq 1$, by dividing \eqref{2.27} by \eqref{2.28} and substituting $a_{30}=a^*_{30}$ and $a_{31}=a^*_{31}$ for it, we can obtain $(a^*_{31}-1)(a_{32}-a^*_{32})=0$, which implies $a_{32}=a^*_{32}$. Therefore,
\begin{align*}
a^*_{31}\neq 1 \Leftrightarrow  (a_{30},\,a_{31},\,a_{32}) = (a^*_{30},\,a^*_{31},\,a^*_{32})
\end{align*}
holds. From \eqref{2.23}, \eqref{2.25} and \eqref{2.27}, it is obviously seen that
\begin{align*}
(a_{30},\,a_{31},\,a_{32}) = (a^*_{30},\,a^*_{31},\,a^*_{32}) \Leftrightarrow  
(b_{30},\,b_{32},\,b_{33}) = (b^*_{30},\,b^*_{32},\,b^*_{33})
\end{align*}
holds. Hence, 
\begin{align*}
a^*_{31}\neq 1 \Leftrightarrow
(a_{30},\,a_{31},\,a_{32},\,b_{30},\,b_{32},\,b_{33}) = (a^*_{30},\,a^*_{31},\,a^*_{32},\,b^*_{30},\,b^*_{32},\,b^*_{33}) 
\end{align*}
Thus, we have the conclusion. 
\end{proof}
Evidently seen by two graphical models Figure \ref{fig:2.1}-\ref{fig:2.2}, AR(2) model does not have identifiability when $t=2$ as yet for the same reason with AR(1) model. We have to add additional information into the model. 
\subsection{Examples of Identifiable Models}
Consider when there is no missing data at time $t=2$. In this case, we can fix 2 parameters $\tau_{21}=\tau_{22}=0$. Hence, there are 3($=5-2$) parameters against 4($>$3) constraints. It may seems the model has identifiability, in fact, it does. This is not proved here, but can do similar way with Proposition \ref{prop:2.2.4}. In another case, consider we have one binary covariate which is invariant for all times and has no missing data such as the information of dose at \citet{machin88}. More specifically, reconsider model as
\begin{align*}
P(Y_1=1;\pi_1)&=\theta_1,\\
P(Y_t=1\mid y_{t-1},x\,;\,\theta_{t0},\theta_{tt-1},\beta_t)&=\mathrm{expit}(\theta_{t0}+y_{t-1}\theta_{tt-1}+\beta_t x)\qquad \mathrm{for}\quad t\geq 2.
\end{align*}
in AR(1) model where  $x$ is a covariate. Here, let $x$ be a binary random variable to consider the worst case to have identifiability.  This graphical model is shown in Figure \ref{fig:2.3}. Note that this models is a conditional model. At time $t\geq 2$, there are 6 parameters against 8 constraints for each time $t$ same as AR(2) model when $t\geq 3$. At time $t=1$, we can fix 2 parameters which has an effect to missing data indicator as 0 since there is no missing data on covariate, the model is identified for the same reason when there is no missing data at $t=2$. Thus, this model probably has identification, in fact, it does. This is not proved here, but can do similar way with Proposition \ref{prop:2.2.4}. Moreover, we can show AR(2) with one covariate model has also identifiability whose graphical model is shown in Figure \ref{fig:2.4}.
\begin{figure}[htbp]
 \begin{minipage}{0.43\hsize}
  \begin{center}
  \includegraphics[scale=0.5]{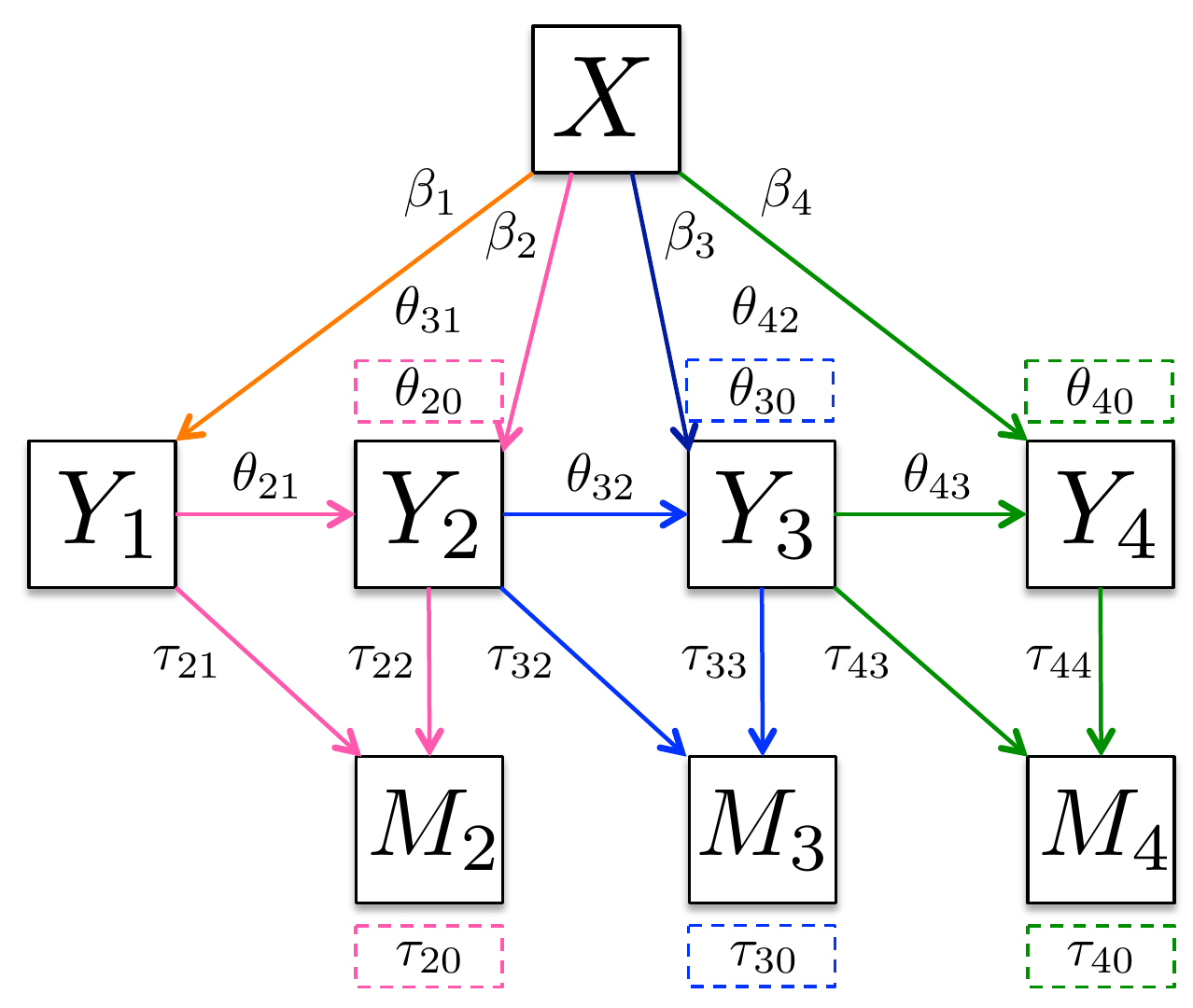}
  \end{center}
  \caption{AR(1) model with one covariate}
  \label{fig:2.3}
 \end{minipage}
  \begin{minipage}{0.8\hsize}
 \end{minipage}
 \begin{minipage}{0.43\hsize}
  \begin{center}
  \includegraphics[scale=0.5]{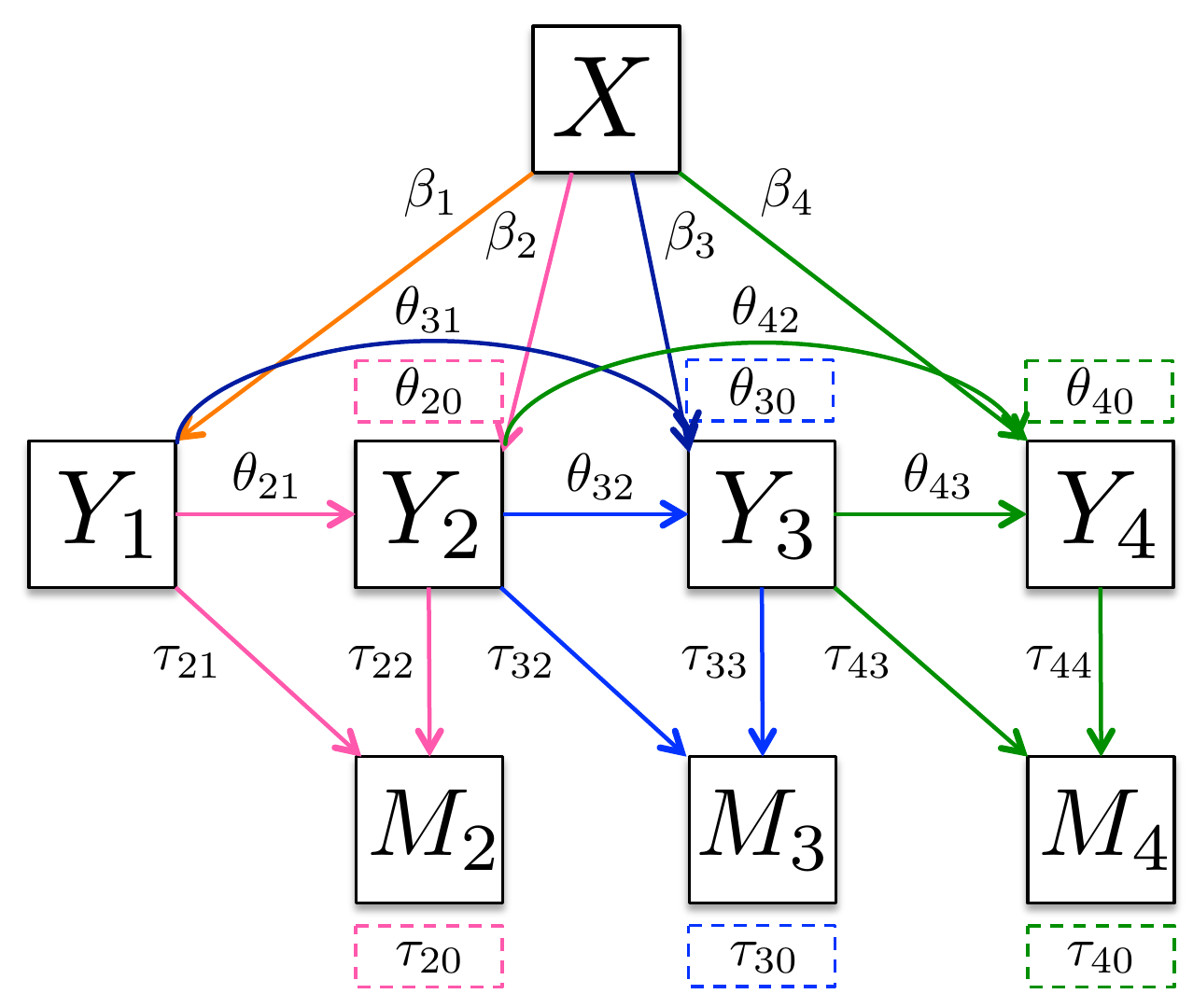}
  \end{center}
  \caption{AR(2) model with one covariate}
  \label{fig:2.4}
 \end{minipage}
\end{figure}
\section{Real Data Analysis}
In this section, we analyze the data given in Table \ref{tb:1.1} by the AR(2) model with one covariate to see effects of a contraceptive DMPA at time $t=1,\,\ldots,\,4$ where it is guaranteed the model has identifiability. In addition, we discuss its missing-data mechanism and select the best model by the likelihood-ratio test. 
\subsection{Parameter Estimation}
Parameters are estimated as MLE and FIML is defined by \eqref{1.2} and \eqref{2.7} as follows
\begin{align*}
L_{1151}(\bm{\xi}) &=\prod_{t=1}^4\prod_{i\in I_{t}}\pi(y_{1i};~\bm{\theta}_1)\llp \prod_{s=2}^{t}f_s(\bm{h}^{(s/2)}_i;~\bm{\theta}_t,\bm{\tau}_t)\rrp^{\bm{1}_{\{t\geq 2\}}}\nonumber\\
&\quad \times\llp 1- \bm{1}_{\{t\leq T-1\}}\sum_{y_{t+1}=0}^1f_{t+1}\lp[\bm{h}^{(t/1)'}_i,~y_{t+1}]';~\bm{\theta}_t,\bm{\tau}_t\rp\rrp,
\end{align*}
where the sample size is $N=1151$.
MLE of $\bm{\xi}_1,\,\ldots,\,\bm{\xi}_4$ can be calculated separately since each parameter is separated in the above likelihood. We use a ``optim'' function to optimize $L_N(\bm{\xi})$ with a programing language R. 

Looking at the ``s.d. (standard deviation)'' term in Table \ref{tb:2.2}, we can see this model has identifiability indeed; if this model does not have identifiability, all of s.d. values diverge to infinity. First, the result of $\beta_t$ means dose of 150mg has a significant difference from zero to the effect of contraception compared to that of 100mg for 6 months and 9 months ($t=2,3$) in terms of $p$-value; this is the direct effect, but we can see the same fact with respect to the total effect\citep[see][]{matsuyama04}. Next, the result of serial correlation is intensively positive. This means once DMPA takes effect, it is also liable to do next time independently from dose. Finally, $\bm{\tau}_2,\,\bm{\tau}_3,\,\bm{\tau}_4$, which are parameters on the missing-data mechanism,  are the most difficult to interpret since the values of s.d. are relatively larger than those of the others parameters. Due to this fact, all parameters do not have significantly difference from zero. In the next subsection, we consider best model by constructing sub-models from this full-model and choose by likelihood ratio test.

\begin{table}[htbp]
\begin{center}
\caption{Results of Parameter Estimation by AR(2) model}
\begin{tabular}{crrr}
\hline
parameter & MLE & s.d. & p-value  \\ \hline 
$ \beta_1$ & 0.124 & 0.149 & 0.406\\ 
$ \beta_2$ & 0.390 & 0.152 & 0.010\\ 
$ \beta_3$ & 0.440 & 0.160 & 0.006\\ 
$ \beta_4$ & 0.124 & 0.149 & 0.406\\ \hline
$\theta_{21}$ & 1.851 & 0.215 & 0.000\\ 
$\theta_{32}$ & 2.014 & 0.195 & 0.000\\ 
$\theta_{43}$ & 1.794 & 0.228 & 0.000\\ 
$\theta_{31}$ & 0.852 & 0.235 & 0.000\\ 
$\theta_{42}$ & 1.382 & 0.233 & 0.000\\ \hline
$\tau_{21}$ & -0.506 & 0.680 & 0.457\\ 
$\tau_{32}$ & -0.276 & 0.546 & 0.613\\ 
$\tau_{43}$ & -1.067 & 0.506 & 0.035\\ 
$\tau_{22}$ & -0.079 & 1.544 & 0.959\\ 
$\tau_{33}$ & -0.719 & 1.231 & 0.559\\ 
$\tau_{44}$ & 0.939 & 0.931 & 0.313\\ \hline
\end{tabular}
\label{tb:2.2}
\end{center}
\end{table}
\subsection{Model Selection}
If there is data involves missing values, information criterion such as AIC and BIC can not be used. Then,  we choose best model by heuristic way: likelihood ratio test. First, we test  missing-data mechanism ``MCAR v.s. NMAR'' and ``MAR v.s. NMAR'' as with \citet{diggle94}. Denote MLE under a constraint $\tau_{21}=\tau_{22}=\tau_{32}=\tau_{33}=\tau_{43}=\tau_{44}=0$ by $\hat{\bm{\xi}}_{MCAR}$  , under a constraint $\tau_{22}=\tau_{33}=\tau_{44}=0$ by $\hat{\bm{\xi}}_{MAR}$  and no constraints by $\hat{\bm{\xi}}_{NMAR}$, i.e., full-model's MLE. We set the probability of type I error to 0.05 in following two tests of its missing-data mechanism. In this settings, the deviance of MCAR and NMAR is,
\begin{align*}
-2\log \frac{L_{1151}(\hat{\bm{\xi}}_{MCAR})}{L_{1151}(\hat{\bm{\xi}}_{NMAR})} =27.157 > \chi^2_{6}(0.05)=12.592.
\end{align*}
and the deviance of MAR and NMAR is,
\begin{align*}
-2\log \frac{L_{1151}(\hat{\bm{\xi}}_{MAR})}{L_{1151}(\hat{\bm{\xi}}_{NMAR})} =1.204 < \chi^2_{3}(0.05)=7.814.
\end{align*}
Hence, the missing-data mechanism is not MCAR, but not to say NMAR. Then, consider sub-models in which using less than 3 parameters from 6 parameters($\tau_{21},\,\tau_{22},\,\tau_{32},\,\tau_{33},\,\tau_{43},\,\tau_{44}$): ${}_6 C_1+{}_6 C_2+{}_6 C_3=41$ sub-models.  The result is shown In Table \ref{tb:2.3}, red letters stands for parameter sets whose deviance is smallest when the estimated number of parameters is $1$(No. 4), $2$(No. 9) or $3$(No. 29) and green letters stands for the parameter sets using at least one parameter at one time. This results show that when estimated number of parameters is 3, the deviance declines drastically compared to when that is $1$ or $2$.  Needless to say, the more number of estimated parameters is, the smaller its deviance becomes.  However, it leads to the number of parameters more than necessary.  The deviance whose number is 27, 29, 33, and 35  is relatively smaller than others' deviance and they are not different significantly. Therefore, we asserts that these 4 models are best model, which are all green letters, namely, chosen one parameter by one time, where No. 27 is a MAR mechanism, but we can not choose which one is best from this data. 
\begin{table}[htbp]
\begin{center}
\caption{Sub-models' deviation}
\begin{tabular}{ccr|ccr|ccr}
\hline

No. &\shortstack{estimated\\ parameter} & \shortstack{deviance\\ ~} &
No. &\shortstack{estimated\\ parameter} & \shortstack{deviance\\ ~} & 
No. &\shortstack{estimated\\ parameter} & \shortstack{deviance\\ ~}\\ \hline 
1& $\tau_{21}$& 18.670 & 15 &$\tau_{22},\tau_{44}$& 14.912 & 29&$\tau_{21},\tau_{33},\tau_{43}$ & \textcolor[rgb]{1,0,0}{\textcolor[rgb]{1,0,0}{1.130}}\\ 
2& $\tau_{22}$& 19.268 &16&$\tau_{32},\tau_{33}$  & 15.878 & 30&$\tau_{21},\tau_{33},\tau_{44}$  & \textcolor[rgb]{0,0.6,0}{3.316}\\ 
3&$\tau_{32}$ & 16.242 &17&$\tau_{32},\tau_{43}$  & 9.692 &31& $\tau_{21},\tau_{43},\tau_{44}$ & 11.282\\ 
4&$\tau_{33}$ & \textcolor[rgb]{1,0,0}{16.160}&18 &$\tau_{32},\tau_{44}$  & 11.878 & 32& $\tau_{22},\tau_{32},\tau_{33}$ & 7.988\\ 
5&$\tau_{43}$ & 20.616 &19&$\tau_{33},\tau_{43}$  & 9.618 &33& $\tau_{22},\tau_{32},\tau_{43}$ & \textcolor[rgb]{0,0.6,0}{1.804}\\ 
6&$\tau_{44}$ & 22.802 &20&$\tau_{33},\tau_{44}$  & 11.804 &34&$\tau_{22},\tau_{32},\tau_{44}$  & \textcolor[rgb]{0,0.6,0}{3.990}\\ \cline{1-3}
7&$\tau_{21},\tau_{22}$& 18.666 &21& $\tau_{43},\tau_{44}$ & 19.770 &35& $\tau_{22},\tau_{33},\tau_{43}$ & \textcolor[rgb]{0,0.6,0}{1.728}\\ \cline{4-6}
8&$\tau_{21},\tau_{32}$ & 7.748 &22& $\tau_{21},\tau_{22},\tau_{32}$ & 7.744 & 36&$\tau_{22},\tau_{33},\tau_{44}$ & \textcolor[rgb]{0,0.6,0}{3.916}\\ 
9&$\tau_{21},\tau_{33}$ & \textcolor[rgb]{1,0,0}{7.672}&23 & $\tau_{21},\tau_{22},\tau_{33}$ & 7.670 &37&$\tau_{22},\tau_{33},\tau_{44}$  & 11.880\\ 
10&$\tau_{21},\tau_{43}$ & 12.128 &24& $\tau_{21},\tau_{22},\tau_{43}$ & 12.124 &38&$\tau_{22},\tau_{43},\tau_{44}$  & 9.336\\ 
11&$\tau_{21},\tau_{44}$ & 14.314 &25&$\tau_{21},\tau_{22},\tau_{44}$  & 14.290 &39&$\tau_{32},\tau_{33},\tau_{43}$  & 11.522\\ 
12&$\tau_{22},\tau_{32}$ & 8.346 &26& $\tau_{21},\tau_{32},\tau_{33}$ & 7.390 &40&$\tau_{32},\tau_{33},\tau_{44}$  & 8.846\\ 
13&$\tau_{22},\tau_{33}$ & 8.272 &27& $\tau_{21},\tau_{32},\tau_{43}$ & \textcolor[rgb]{0,0.6,0}{1.204} &41&$\tau_{32},\tau_{43},\tau_{44}$  & 8.772\\ 
14&$\tau_{22},\tau_{43}$ & 12.726 &28& $\tau_{21},\tau_{32},\tau_{44}$ & \textcolor[rgb]{0,0.6,0}{3.392} & & & \\ \hline
\end{tabular}
\label{tb:2.3}
\end{center}
\end{table}
\section{Conclusions and Discussion}
It is well known the identifiability of parameters often becomes problem because of poor information of the data $\bm{Y}$ in the analysis of binary data. In addition, if $\bm{Y}$ has missing values, the analysis which ignores missing data such as list wise deletion may make severe bias to the estimations such as mean and variance. Thus, the information of missingness must be taken into the model, however, which needs additional parameters according to its missing-data mechanism.  In particular, parameters prescribing whether the missing-data mechanism is NMAR or not tend to become unidentifiable. 

In this paper, we defined AR($p$) model which depends on the history only through the previous $p$ responses. Then, we gave a necessary and sufficient condition which makes its verification easy in AR($p$) model. For example, it is easily proved from the derived condition that even a simple AR(1) model does not have identifiability, but additional information makes it identifiable such as covariates or the fact data are not missing at the started two waves in a row.

However, this results are yielded under two critical assumptions; One is an assumption that there are no parameters which satisfies equality constraints  such as $\tau_{22}=\ldots=\tau_{TT}$ and the other is limiting the model to conditional models. First, if a model does not have identifiability, putting equality constraints on parameters is a natural idea. We have to rethink a condition to have identifiability under this constraints. Secondly, many complicated models such as marginal models and hybrid models are proposed by several authors and they are more used than conditional ones. Conditional models have some connection to these models since both of them factor same probability, but the likelihood of these models would become more complicated. We also need to derive the conditions correspond to these complicated models. 

\appendix
\section{Proof of \eqref{2.4}}
\label{A.2}
In this section, we prove \eqref{2.4} $\sup_{\bm{\xi}\in\Xi_{\varepsilon}} L(\bm{\xi})<L(\bm{\xi}^*)$, 
 under three assumptions: \eqref{2.3} holds,  compactness of a parameter space $\bm{\Xi}$ and continuity of $L(\bm{\xi})$. First, we show that 
\begin{align}
 L(\bm{\xi})<L(\bm{\xi}^*) \quad \mathrm{for}~\bm{\xi}\neq \bm{\xi}^*
 \label{a.2.1}
\end{align}
holds. In fact,
\begin{align*}
 &L(\bm{\xi})-L(\bm{\xi}^*)\\
 &=\sum_{t=1}^T\sum_{\bm{y}\in\{0,\,1\}^{\otimes T}} \log \llp g_t(\bm{m}^{(t)},\,\bm{y}^{(t)};\,\bm{\xi})\rrp g\lp\bm{m}^{(t)},\,\bm{y}\ ; \ \bm{\xi}^*\rp \\
&\quad -\sum_{t=1}^T\sum_{\bm{y}\in\{0,\,1\}^{\otimes T}} \log \llp g_t(\bm{m}^{(t)},\,\bm{y}^{(t)};\,\bm{\xi}^*)\rrp g\lp\bm{m}^{(t)},\,\bm{y}\ ; \ \bm{\xi}^*\rp\\
&=\sum_{t=1}^T\sum_{\bm{y}\in\{0,\,1\}^{\otimes T}} \log \frac{ g_t(\bm{m}^{(t)},\,\bm{y}^{(t)};\,\bm{\xi})}{g_t(\bm{m}^{(t)},\,\bm{y}^{(t)};\,\bm{\xi}^*)} g\lp\bm{m}^{(t)},\,\bm{y}\ ; \ \bm{\xi}^*\rp\\
&\leq\sum_{t=1}^T\sum_{\bm{y}\in\{0,\,1\}^{\otimes T}} \llp\frac{ g_t(\bm{m}^{(t)},\,\bm{y}^{(t)};\,\bm{\xi})}{g_t(\bm{m}^{(t)},\,\bm{y}^{(t)};\,\bm{\xi}^*)} -1\rrp g\lp\bm{m}^{(t)},\,\bm{y}\ ; \ \bm{\xi}^*\rp\\
&=\sum_{t=1}^T\sum_{\bm{y}^{(t)}\in\{0,\,1\}^{\otimes t}} g_t(\bm{m}^{(t)},\,\bm{y}^{(t)};\,\bm{\xi}) - \sum_{t=1}^T\sum_{\bm{y}\in\{0,\,1\}^{\otimes T}}g(\bm{m}^{(t)},\,\bm{y};\,\bm{\xi}^*) \\
&= 1-1 =0
\end{align*}
holds, where we have equality if and only if  \eqref{2.6} holds. Recall that the condition \eqref{2.6} implies we obtain equality if and only if $\bm{\xi}=\bm{\xi}^*$, hence this means \eqref{a.2.1}.

Then, we prove \eqref{2.4}. There is a minor gap between the condition \eqref{a.2.1} and \eqref{2.4}. To fill the gap, we have to show that there are no sequences $\{\bm{\xi}_n\}_{n\in\mathbb{N}}$ that tends to $\bm{\xi}_0\in\Xi \cap\Xi_{\varepsilon}$ such that attains $L(\bm{\xi}_0)=L(\bm{\xi}^*)$.
Suppose that there exists a sequence $\{\bm{\xi}_n\}_{n\in\mathbb{N}}\in \Xi \cap \Xi_{\varepsilon}$ such that $L(\bm{\xi}_n)\to L(\bm{\xi}^*)$. Due to the compactness of $\Xi \cap \Xi_{\varepsilon}$, there exists a subsequence $\{\bm{\xi}_{n_k}\}_{k\in \mathbb{N}}$ of $\{\bm{\xi}_n\}$ and $\bm{\xi}_0\in\Xi \cap \Xi_{\varepsilon}$ such that $\bm{\xi}_{n_k} \to \bm{\xi}_0$. By the continuity of $L$, $L(\bm{\xi}_{n_k})\to L(\bm{\xi}_0)=L(\bm{\xi}^*)$, which contradicts \eqref{a.2.1}. Thus, we have the conclusion. \qed
\section{Proof of Lemma \ref{lem:2.2.1}}
\label{A.3}
In this section, we prove Lemma \ref{lem:2.2.1}.
\begin{proof}
For simplicity, we abbreviate the parameter $\bm{\xi}$ in the following proof. \\
For $t=1,\ldots,T-1$,
\begin{align*}
&P\lp\bm{M}=\bm{m}^{(t)},~\bm{Y}^{(t)}=\bm{y}^{(t)}\rp\\
& =P\lp M_{t+1}=1,~M_{t}=0,~\bm{Y}^{(t)}=\bm{y}^{(t)}\rp\\
&=P\lp M_{t+1}=1\mid M_{t}=0,~\bm{y}^{(t)}\rp P\lp M_{t}=0,~\bm{Y}^{(t)}=\bm{y}^{(t)}\rp \\
&=\llp 1-\sum_{y_{t+1}=0}^1\frac{P\lp M_{t+1}=0,~\bm{Y}^{(t+1)}=[\bm{y}^{(t)'},y_{t+1}]'\rp}{P\lp M_{t}=0,~\bm{Y}^{(t)}=\bm{y}^{(t)}\rp} \rrp  P\lp M_{t}=0,~\bm{Y}^{(t)}=\bm{y}^{(t)}\rp \\
&= P\lp M_{t}=0,~\bm{Y}^{(t)}=\bm{y}^{(t)}\rp - \sum_{y_{t+1}=0}^1P\lp M_{t+1}=0,~\bm{Y}^{(t+1)}=[\bm{y}^{(t)'},y_{t+1}]'\rp .
\end{align*}
By including the case $t=T$, 
\begin{align}
&P\lp\bm{M}=\bm{m}^{(t)},~\bm{Y}^{(t)}=\bm{y}^{(t)}\rp\nonumber\\
&= P\lp M_{t}=0,~\bm{Y}^{(t)}=\bm{y}^{(t)}\rp - \bm{1}_{\{t\leq T-1\}}\sum_{y_{t+1}=0}^1P\lp M_{t+1}=0,~\bm{Y}^{(t+1)}=(\bm{y}^{(t)'},y_{t+1})'\rp . \label{a.3.1}
\end{align}
Due to the property of AR($p$) model, on the other hand, for $t=2,\ldots T$,
\begin{align*}
&P\lp M_{t}=0,~\bm{Y}^{(t)}=\bm{y}^{(t)}\rp\\
&=P\lp M_{t}=0\mid \bm{y}^{(t)}\rp P\lp\bm{Y}^{(t)}=\bm{y}^{(t)}\rp\\
&=P\lp M_{t}=0\mid M_{t-1}=0,~\bm{y}^{(t)}\rp P\lp M_{t-1}=0\mid \bm{y}^{(t)}\rp P\lp Y_t=y_{t}\mid \bm{y}^{(t-1)}\rp P\lp\bm{Y}^{(t-1)}=\bm{y}^{(t-1)}\rp  \\
&=P\lp M_{t}=0\mid M_{t-1}=0,~\bm{y}^{(t)}\rp P\lp M_{t-1}=0\mid \bm{y}^{(t-1)}\rp \\
&\quad\times P\lp Y_t=y_{t}\mid \bm{h}^{((t-1)/p)}\rp P\lp\bm{Y}^{(t-1)}=\bm{y}^{(t-1)}\rp  \\
&=\llp P\lp M_{t}=0\mid M_{t-1}=0,~\bm{h}^{(t/p)}\rp P\lp Y_t=y_{t}\mid \bm{h}^{((t-1)/p)}\rp\rrp \\
&\quad \times P\lp M_{t-1}=0\mid \bm{y}^{(t-1)}\rp P\lp\bm{Y}^{(t-1)}=\bm{y}^{(t-1)}\rp\\
&=\llp\prod_{s=2}^{t}P\lp M_{s}=0\mid M_{s-1}=0,~\bm{h}^{(s/p)}\rp P\lp Y_s=y_{s}\mid \bm{h}^{((s-1)/p)}\rp\rrp 
P(M_{1}=0\mid y_{1})P(Y_{1}=y_{1}) \\
&=P(Y_{1}=y_{1})\prod_{s=2}^{t}P\lp M_{s}=0\mid M_{s-1}=0,~\bm{h}^{(s/p)}\rp P\lp Y_s=y_{s}\mid \bm{h}^{((s-1)/p)}\rp \\
&=\pi(y_{1})\prod_{s=2}^{t}f_s(\bm{h}^{(s/p)}).
\end{align*}
By including the case $t=1$, 
\begin{align}
&P\lp M_{t}=0,~\bm{Y}^{(t)}=\bm{y}^{(t)}\rp=\pi(y_{1})\llp \prod_{s=2}^{t}f_s(\bm{h}^{(s/p)})\rrp^{\bm{1}_{\{t\geq 2\}}}
\label{a.3.2}
\end{align}
Hence, by substituting \eqref{a.3.2} for \eqref{a.3.1}, we have
\begin{align*}
g(\bm{m}^{(t)},\,\bm{y}^{(t)}) &= P\lp\bm{M}=\bm{m}^{(t)},~\bm{Y}^{(t)}=\bm{y}^{(t)}\rp\\
 &=\pi(y_{1})\llp \prod_{s=2}^{t} f_s(\bm{h}^{(s/p)})\rrp^{\bm{1}_{\{t\geq 2\}}}\llp 1- \bm{1}_{\{t\leq T-1\}}\sum_{y_{t+1}=0}^1f_{t+1}\lp[\bm{h}^{(t/(p-1))'},~y_{t+1}]'\rp\rrp.
\end{align*}
\end{proof} 

\bibliographystyle{apa}
\bibliography{refs}  

\end{document}